\documentclass[letterpaper, 10 pt, conference]{ieeeconf}  

\IEEEoverridecommandlockouts                              
\overrideIEEEmargins

\usepackage{cite}
\usepackage{amsmath,amssymb,amsfonts}
\usepackage{algorithmic}
\usepackage{graphicx}
\usepackage{textcomp}
\usepackage{setspace}
\usepackage{booktabs}

\usepackage{epsfig}
\usepackage{times} 
\usepackage{svg}
\usepackage[linesnumbered,ruled,vlined]{algorithm2e} 

\usepackage[nolist, nohyperlinks]{acronym}

\usepackage{bm}
\usepackage{breqn}
\usepackage{amsmath}
\usepackage{amssymb}
\usepackage{tabularx}

\usepackage{amsthm}

\newcommand{\Z}
{\mathbb{Z}}
\newcommand{\reals}{\mathbb{R}}
\newcommand{\R}{\reals}

\newcommand{\Acal}{\mathcal{A}}
\newcommand{\Bcal}{\mathcal{B}}
\newcommand{\Ccal}{\mathcal{C}}

\newcommand{\Ecal}{\mathcal{E}}

\newcommand{\Gcal}{\mathcal{G}}

\newcommand{\Kcal}{\mathcal{K}}
\newcommand{\Lcal}{\mathcal{L}}

\newcommand{\Ncal}{\mathcal{N}}

\newcommand{\Rcal}{\mathcal{R}}

\newcommand{\Ucal}{\mathcal{U}}
\newcommand{\Vcal}{\mathcal{V}}

\newcommand{\Zcal}{\mathcal{Z}}

\newcommand{\eqn}[1]{\begin{align} #1 \end{align}}
\newcommand{\eqnN}[1]{\begin{align*} #1 \end{align*}}
\newcommand{\seqn}[2][]{
\begin{subequations}
    #1
\begin{align} #2 \end{align}
\end{subequations}
}
\newcommand{\seqnone}[2][]{
\begin{equation} #1
\begin{aligned} #2 \end{aligned}
\end{equation}
}

\newcommand{\norm}[1]{\left\Vert #1 \right \Vert}



\theoremstyle{plain}
\newtheorem{theorem}{Theorem}
\newtheorem{corollary}{Corollary}
\newtheorem{lemma}{Lemma}
\newtheorem{problem}{Problem}
\newtheorem{definition}{Definition}
\newtheorem{prop}{Proposition}

\theoremstyle{definition}
\newtheorem{assumption}{Assumption}
\newtheorem{remark}{Remark}

\theoremstyle{remark}

%

\makeatletter
\let\NAT@parse\undefined
\makeatother
\usepackage{hyperref}
\hypersetup{
colorlinks, 
    linkcolor={red!50!black},
    citecolor={blue!50!black},
    urlcolor={blue!80!black}
}
\usepackage{cleveref}

\crefformat{problem}{Problem~#2#1#3}
\crefformat{assumption}{Assumption~#2#1#3}
\crefformat{prop}{Proposition~#2#1#3}

\usepackage[nolist,nohyperlinks]{acronym}
\acrodef{QP}[QP]{Quadratic Program}
\acrodef{QCQP}[QCQP]{Quadratically Constrained Quadratic Program}
\acrodef{MFCQ}[MFCQ]{Mangasarian-Fromovitz Constraint Qualification}
\acrodef{SSOC}[SSOC]{Strong Second Order Condition}
\acrodef{LICQ}[LICQ]{Linear Independence Constraint Qualification}
\acrodef{KKT}[KKT]{Karush–Kuhn–Tucker}

\title{\LARGE \bf Distributed Resilience-Aware Control in Multi-Robot Networks
}

\author{Haejoon Lee$^{1}$ and Dimitra Panagou$^{1,2}$
\thanks{*This work is supported by the Air Force Office of Scientific Research (AFOSR) under FA9550-23-1-0163}
\thanks{$^{1}$Department of Robotics,
        University of Michigan, Ann Arbor, MI, USA
        {\tt\small haejoonl@umich.edu}}%
\thanks{$^{2}$Department of Aerospace Engineering,
        University of Michigan, Ann Arbor, MI, USA
        {\tt\small dpanagou@umich.edu }}%
}

\begin{document}

\maketitle
\thispagestyle{empty}
\pagestyle{empty}

\begin{abstract}
Ensuring resilient consensus in multi-robot systems with misbehaving agents remains a challenge, as many existing network resilience properties are inherently combinatorial and globally defined. While previous works have proposed control laws to enhance or preserve resilience in multi-robot networks, they often assume a fixed topology with known resilience properties, or require global state knowledge. These assumptions may be impractical in physically-constrained environments, where safety and resilience requirements are conflicting, or when misbehaving agents share inaccurate state information. In this work, we propose a distributed control law that enables each robot to guarantee resilient consensus and safety during its navigation without fixed topologies using only locally available information. To this end, we establish a sufficient condition for resilient consensus in time-varying networks based on the degree of non-misbehaving or normal agents. Using this condition, we design a Control Barrier Function (CBF)-based controller that guarantees resilient consensus and collision avoidance without requiring estimates of global state and/or control actions of all other robots. Finally, we validate our method through simulations.
\end{abstract}

\section{Introduction}
The consensus problem in multi-agent systems finds numerous applications from control to optimization~\cite{zhu2019, tan2022}. However, the performance of consensus in general degrades in the presence of misbehaving agents that share faulty or incorrect information. Resilient consensus has therefore been extensively studied~\cite{LeBlanc13, wang2022, CDC2024, TAC2025, leblanc2012_continuous}. In particular, the \textit{Weighted Mean-Subsequence-Reduced} (W-MSR) algorithm was introduced in~\cite{LeBlanc13} to guarantee consensus among non-misbehaving or normal agents despite the presence of misbehaving agents.

The challenge of using W-MSR algorithm lies in its reliance on complex global topological network resilience properties, such as $r\textit{-robustness}$ and $(r,s)\textit{-robustness}$~\cite{LeBlanc13}. These properties are inherently combinatorial, making them difficult to compute online~\cite{zhang2015}. Moreover, in many multi-robot applications, links between robots are formed if their relative distances are within a threshold, which further complicates the practical implementation of these network properties in real-time operation.

To implement these network properties,~\cite{saldana2017,time_varying_strongly_usevitch20} enforce local inter-robot connectivity by maintaining predetermined topologies with known resilience levels. This approach eliminates the need for continuous resilience computation. However, its limitation is that the imposed connectivity constraints induce fixed topologies, making it impractical in dynamic or constrained environments where robots need flexible movements. There is earlier work considering the maintenance of network resilience without relying on fixed topologies~\cite{guerrero2020, saulnier2017, cavorsi2022, zhang2024, ICRA2025, saldana2017}, but these resilience-aware approaches rely on global state knowledge (i.e., states of all robots). To obtain global state knowledge, each robot needs to accurately estimate it based on the information from its neighbor robots, which becomes challenging as information being shared is unreliable in adversarial settings.

In this paper, we focus on the design of distributed controllers based on Control Barrier Functions (CBFs)~\cite{ames_cbf}, where each robot computes its control inputs using locally-available information only, so that it maintains a resilient structure and ensures resilient consensus without fixed topologies. The problem of developing CBF-based controllers for improving or maintaining resilience of multi-robot networks without relying on fixed network topologies has been studied in~\cite{guerrero2020,cavorsi2022,ICRA2025,zhang2024}. However, these approaches take a centralized CBF form, or assume that each robot can accurately estimate the global state knowledge and the control actions of all other robots at every time instant. These assumptions are unrealistic in complex and adversarial networks, and difficult to resolve due to the global nature of the resilience properties.

To address these problems, the main contribution of this paper is twofold. We develop a sufficient condition for resilient consensus in time-varying networks based only on the degree (i.e., the number of neighbors) of normal agents. While \cite{leblanc2012_continuous} establishes similar conditions based on degree of all agents in continuous-time consensus dynamics, our condition applies to discrete-time dynamics and specifically to the W-MSR algorithm. Then, leveraging this condition, we design a distributed controller that maintains each normal robot's degree above a threshold while preventing collisions. We prove that our controller guarantees resilience and safety using only local information under certain assumptions. In contrast to earlier work \cite{saulnier2017,guerrero2020, cavorsi2022, zhang2024}, our approach (a) focuses on local connectivity of normal agents rather than enforcing global resilience properties, (b) eliminates reliance on potentially unreliable global state estimation, and (c) explicitly accounts for inter-agent collisions, which is overlooked in previous CBF-based approaches~\cite{bhatia_decentralized_2024, de_carli_distributed_2024, capelli2020, cavorsi2022, guerrero2020, zhang2024}.

\emph{Organization}: 
In Section~\ref{sec:prelim}, we introduce the notation, preliminaries, and the problem statement. In Section~\ref{sec:minimum_degree} we present a sufficient condition for resilient consensus, and in Section~\ref{sec:controller} we design a distributed CBF-based controller for resilience. Simulation results evaluate our approach in Section~\ref{sec:sim}, while Section~\ref{sec:conc} concludes our work.

\section{Preliminaries and Problem Statement}
\label{sec:prelim}
We denote the cardinality of a set $\mathcal S$ as $|\mathcal S|$. We use $\R_+$, $\R_{\geq 0}$, and $\mathbb Z_{\geq 0}$ to denote sets of positive real, non-negative real, and non-negative integer numbers, respectively. We also use $\norm{\cdot}:\mathbb R^m\to \mathbb R_{\geq 0}$ to represent the Euclidean norm.

\subsection{Multi-Robot Network System}

Consider a multi-robot system of $n$ robots. Robot $i\in\{1,\dots,n\}$ has a physical state $x_i(t) \in \mathbb R^m$ and control input $u_i(t) \in \Ucal_i\subset \R^m$, denoting position and velocity at time $t$, respectively. Each robot $i$ has single integrator dynamics as
\begin{equation}
    \dot {x}_i(t) = u_i(t).
    \label{eq:dynamics}
\end{equation}
We model robots as circular disks with radii $\frac {\Delta_d}{2}$, where $\Delta_d\in \R_+$. We also define the state and control input of the entire system as $\mathbf x(t)=\begin{bmatrix}
    x_1(t)^T \cdots x_n(t)^T
\end{bmatrix}^T \in \mathbb R^{nm}$ and $\mathbf u(t)=\begin{bmatrix}
    u_1(t)^T \cdots u_n(t)^T
\end{bmatrix}^T \in \Ucal = \Ucal_1\times \dotsm\times \Ucal_n \subset \R^{nm}$. Then the system dynamics is 
\eqn{\label{eq:combined_dynamics}\dot {\mathbf{x}}(t) = \mathbf{u}(t).} 
The distance between robots $i$ and $j$ is denoted as $\Delta_{ij}(\mathbf x(t))=\norm{x_i(t)-x_j(t)}$. We model the multi-robot network as a time-varying simple, undirected graph $\mathcal G(t)=(\mathcal V, \mathcal E(t))$, where the vertex set $\mathcal V=\{1,\dots, n\}$ includes robots, and the edge set $\mathcal E(t)=\{(i,j) \mid \Delta_{ij}(\mathbf x(t))\leq R\}$ defines connections (links) between two robots within a communication radius $R \in \R_+$. The set of neighbors of robot $i\in \Vcal$ at time $t$ is denoted as $\mathcal N_i(t) =\{j \in \mathcal V \mid (i,j) \in \mathcal E(t)\}$. We also denote $\Bcal_i(t)=\Ncal_i(t)\cup \{i\}$. Each robot can observe states of its own and of its neighbors. The degree of robot $i$ at time $t$ is denoted as $\delta_i(\Gcal(t))=|\Ncal_i(t)|$. Then, the minimum degree of $\Gcal(t)$ is denoted as $\delta_{\min}(\Gcal(t))=\min_{i\in \Vcal} \delta_i(\Gcal(t))$. 

We smoothly approximate the network connections through the weighted adjacency matrix $\mathbf A(\mathbf x(t))$, whose entries are
\begin{equation}a_{ij}(\mathbf x(t)) = \begin{cases}
     \frac {q_1} {1+e^{-q_{2}(R^2-\Delta_{ij}(\mathbf x(t))^2)^2}} - \frac {q_1} 2 & \Delta_{ij}(\mathbf x(t)) \leq R, \\
     0 & \text{otherwise},
\end{cases}    \label{eq:adjacency}
\end{equation}
where $q_1,q_2\in \mathbb R_+$. For simplicity, we drop $t$ from $\mathbf x(t)$, $x_i(t)$, $\mathbf u(t)$, and $u_i(t)$ when the context is clear. The function \eqref{eq:adjacency} is continuously differentiable with respect to $\mathbf x$. 
At discrete times $t_{l} = l\tau_1$ for $l\in \mathbb Z_{\geq0}$, where $\tau_1\in \mathbb R_+$ is an update interval, robot $i$ shares its 
connectivity level $c_i(t_l)\in \R_{\geq 0}$ with its neighbors. Robot $i$ maintains and updates its connectivity level as
\eqn{\label{eq:connectivity}c_i(t) = \sum_{j\in \Ncal_i(t_l)}a_{ij}(\mathbf x(t_l)), \quad \forall t \in [t_l, t_{l+1}).}
Each neighbor who receives it also maintains the value of~\eqref{eq:connectivity} for $t\in [t_l, t_{l+1})$ until the updated value is shared at $t_{l+1}$.
To simplify our analysis on our main results, we assume $\tau_1$ is sufficiently small (i.e., $\tau_1 \to 0$) such that 
\eqn{\label{eq:continuous_a}c_i(t)=\sum_{j\in \Ncal_i(t)} a_{ij}(\mathbf x(t))}
for all $t$; that is, each agent shares with neighbors in real time.


\subsection{Weighted Mean Subsequence Reduced Algorithm}
At discrete times $t_{p} =p\tau_2 $ for $p\in \mathbb Z_{\geq0}$ where $\tau_2\in \mathbb R_+$ (not necessarily the same as $\tau_1$) is the update interval, robot $i$ also shares its consensus state $y_i(t_p) \in \mathbb R$ with its neighbors. Robot $i$ has $y_i(t)=y_i(t_p)$ $\forall t \in [t_p, t_{p+1})$ and an update protocol:
\begin{equation}
y_i(t_{p+1})=\sum_{j\in \Bcal_i(t_p)} w_{ij}(t_p)y_j(t_p),
    \label{eq:linear}
\end{equation}
where $w_{ij}(t_p)$ is the weight assigned to $y_j(t_p)$ by agent $i$. We assume $\exists \beta \in(0,1)$ such that $\forall i \in \mathcal V$ $\forall p \in \mathbb Z_{\geq 0}$,
\begin{itemize}
    \item $w_{ij}(t_p)\geq \beta$ if $j\in \Bcal_i(t_p)$, or $w_{ij}(t_p)=0$ otherwise,
    \item $\sum_{j=1}^n w_{ij}(t_p)=1$.
\end{itemize}
Agents are guaranteed to achieve consensus on $y_i(t)$ through the protocol \eqref{eq:linear} if there exists a uniformly bounded sequence of contiguous time intervals where the union of $\Gcal(t)$ in each interval is connected \cite[Theorem 3.10]{discrete_time_varying_consensus_sufficient}. However, when $\Gcal$ has misbehaving and non-misbehaving (normal) agents, the consensus guarantee no longer holds. In this paper, we consider one specific type of misbehaving agents:
\begin{definition}[\textbf{malicious robot}]
    \label{misbeh}
    A robot $i\in \Vcal$ is \textbf{malicious} if it sends $c_i(T_1)$ and $y_i(T_2)$ to all its neighbors at times $T_1$ and $T_2$, respectively, but does not follow the nominal update protocols~\eqref{eq:continuous_a} and/or~\eqref{eq:linear} at some times $T_1$ and/or $T_2$.
\end{definition}

\noindent We define sets of normal and malicious agents as $\Lcal \subseteq \Vcal$ and $\Acal \subseteq \Vcal$, respectively, such that $\Lcal \cap \Acal =\emptyset$ and $\Lcal \cup \Acal = \Vcal$. Various threat models describe the number of malicious robots in a network \cite{LeBlanc13}, and we consider the following model:
\begin{definition}[$\mathbf F$-\textbf{total}]
    \label{ftotal}
    A set $\mathcal S \subset \mathcal V$ is $\mathbf F$-\textbf{total} if it contains at most $F$ nodes in the graph (i.e. $|\mathcal S|\leq F)$.
\end{definition}

Then resilient consensus is defined as \cite{LeBlanc13}:
\begin{definition}[\textbf{resilient consensus}]
Under an $F$-total attack, the normal nodes are said to achieve \textbf{resilient consensus} if both of the conditions hold for any initial consensus states:
\label{def:consensus}
\begin{itemize}

    \item $\exists M\in \R$ such that $\lim_{p\to \infty} y_i(t_p)=M$ $\forall i\in \Lcal$.
    
    \item $y_i(t_p) \in [\underline{z}(0),\bar{z}(0)]$, $\forall i \in \Lcal$, $\forall p \in \Z_{\geq 0}$, where $\underline{z}(0)=\min_{i \in \Lcal} y_i(0)$ and $\bar z(0)=\max_{i \in \Lcal} y_i(0)$.
\end{itemize}
\end{definition}

\begin{remark}
Robots share two different values, $c_i(t)$ and $y_i(t)$, at different update intervals. Notice that while malicious robots may manipulate one or both values by not following the nominal protocols~\eqref{eq:continuous_a} and/or~\eqref{eq:linear}, normal robots focus solely on achieving resilient consensus on $y_i(t)$. 
\end{remark}

In~\cite{LeBlanc13}, the W-MSR (\textit{Weighted-Mean Subsequent Reduced}) algorithm was introduced to guarantees resilient consensus. A normal robot $i\in \Lcal$ performs W-MSR algorithm with parameter $F$ at every time step $t_p$ as follows:
\begin{enumerate}
     \item Robot $i$ collects values $y_j(t_p)$ from its neighbors $j\in \Ncal_i(t_p)$ and sorts them.
    \item If there are at least $F$ strictly larger values than its own $y_i(t_p)$, it discards the largest $F$; otherwise, it removes all larger values. Similarly, it does the same for values less than $y_i(t_p)$. 
    \item Let $\Rcal_i(t_p)$ be the set of robots whose values are discarded in Step 2. Then, it applies
    \eqn{y_i(t_{p+1})=\sum_{j\in \Bcal_i(t_p)\setminus \Rcal_i(t_p)} w_{ij}(t_p)y_j(t_p),}
    where $w_{ij}(t_p)$ satisfies the same conditions stated for the nominal consensus update protocol~\eqref{eq:linear}.  
\end{enumerate}
W-MSR algorithm ensures that normal robots can achieve resilient consensus despite the presence of up to $F$ malicious agents, under certain topological properties of the communication graph. We review these relevant properties below:
\begin{definition}[$\mathbf {(r,s)}$-\textbf{reachable}]
    \label{reachability}
  Given a graph $\mathcal G(t)=(\mathcal V, \mathcal E(t))$ and
a nonempty subset $\mathcal S \subset \mathcal V$, $\mathcal S$ is $\mathbf {(r,s)}$-\textbf{reachable} at time $t$ if there are at least $s \in \mathbb Z_{\geq 0}$ nodes in $\mathcal S$ with at least $r \in \mathbb Z_{\geq 0}$ neighbors outside of $\mathcal S$ at $t$; i.e., given $\mathcal X^r_{\mathcal S}(t) = \{i \in \mathcal S : |\mathcal N_i(t)\setminus \mathcal S| \geq r\}$, $|\mathcal X^r_{\mathcal S}(t)|\geq s$.

\end{definition}

\begin{definition}[$\mathbf {(r,s)}$-\textbf{robust}]
    \label{def:robustness}
    A graph $\mathcal G(t) = (\mathcal V,\Ecal(t))$ is $\mathbf {(r,s)}$-\textbf{robust} at time $t$ if for every
pair of nonempty, disjoint subsets $\mathcal S_1,\mathcal S_2 \subset \mathcal V$, at least one
of the following holds at $t$ (recall $\mathcal X^r_{\mathcal S_k}(t) = \{i \in \mathcal S_k : |\mathcal N_i(t) \setminus \mathcal S_k| \geq r\}$ for $k\in\{1,2\}$):
\begin{itemize}
    \item $|\mathcal X^r_{\mathcal S_1}(t)|= |\mathcal S_1|$;
    \item $|\mathcal X^r_{\mathcal S_2}(t)|= |\mathcal S_2|$;
    \item $|\mathcal X^r_{\mathcal S_1}(t)|$ + $|\mathcal X^r_{\mathcal S_2}(t)|\geq s$.
\end{itemize}
\end{definition}
A network being $(F+1,F+1)$-robust at all times is a sufficient condition for its normal robots to achieve resilient consensus through the W-MSR algorithm under an $F$-total malicious attack~\cite[Corollary 1]{LeBlanc13}. Note for a graph of $n$ nodes, $F\leq\lfloor \frac{n-1}{2}\rfloor$, which is a direct application of~\cite[Lemma 4]{LeBlanc13}. Computing $(r,s)$-robustness of a network is challenging due to its combinatorial nature~\cite{zhang2015}, making its integration into multi-robot systems difficult. We build on results from this notion of resilience to present an alternative sufficient condition based on the minimum degree of normal agents, offering a more computationally efficient approach.

\subsection{Control Barrier Function}
In this work, we leverage Control Barrier Functions (CBFs)~\cite{ames_cbf} to ensure resilient consensus despite an $F$-total attack. Note we are considering time-invariant CBFs. Let a continuously differentiable function $h:\mathcal D \subset \mathbb R^{nm} \to \mathbb R$ define a safe set 
$\Ccal=\{\mathbf x \in \R^{nm} \mid h(\mathbf x)\geq 0 \}$ and its boundary $\partial \Ccal=\{\mathbf x \in \R^{nm} \mid h(\mathbf x)=0\}$. We say $h$ is a CBF for the system~\eqref{eq:combined_dynamics} if there exists an extended class $\Kcal_{\infty}$ function $\alpha:\R \to \R$ such that 
\eqn{\sup_{\mathbf u \in \Ucal} \Big[\frac {\partial h}{\partial \mathbf x} \dot {\mathbf x} +\alpha(h(\mathbf x))\Big]\geq 0, \quad \forall \mathbf x \in \Ccal.\label{eq:cbf}}
If $h$ is a CBF on a given set $\Ccal$ for given system dynamics and $\frac {\partial h}{\partial \mathbf x}(\mathbf x)\neq 0$ $\forall \mathbf x \in \partial \mathcal C$, then any locally Lipschitz continuous controller $\mathbf u^*(\mathbf x) \in K_{\textup{cbf}}(\mathbf x)=\{\mathbf u \in \Ucal \mid \frac {\partial h}{\partial \mathbf x} \dot {\mathbf x} \geq -\alpha(h(\mathbf x))\}$ renders the set $\Ccal$ forward invariant~\cite[Theorem~2]{ames_cbf}. 

\subsection{Problem Statement}
Let $\Gcal(t)$ be a multi-robot network formed by $n$ robots' physical states $\mathbf x$ at time $t\geq t_0$, where $t_0 \in \R_{\geq 0}$ is the initial time. Each robot $i\in \Vcal$ also has its own its connectivity level $c_i(t)$ and consensus state $y_i(t)$, which it shares with its neighbors. Robots are tasked with achieving consensus on $y_i(t)$ while navigating using a desired controller $u_{i,\text{des}}:\mathbb R^m\rightarrow \mathcal U_i$. Our objective is to design distributed control strategies that enable normal robots to achieve resilient consensus without fixed communication topologies, while also minimally deviating from their desired controllers. Ensuring resilient consensus in this system requires control laws for both the consensus states $y_i(t)$ and the physical states $x_i(t)$ of robots, which we address separately. This paper focuses on cases where malicious robots disrupt consensus through information only, that is:
\begin{assumption}
\label{assum:attack}
   Malicious robots follow the same control laws for their physical states as normal robots and do not engage in physical disruptions (e.g., collisions).
\end{assumption}
\noindent This assumption is reasonable from the perspective that, in principle, normal robots may have detection mechanisms to detect anomalous behaviors; hence it is reasonable to assume that malicious robots aim to mimic the behavior of the normal agents' physical states to remain undetected. The consideration of both physical and informational attacks is future work.

To update consensus states, robots adopt W-MSR algorithm~\cite{LeBlanc13}, which ensures resilient consensus given that the network topology is $(F+1,F+1)$-robust $\forall t\geq t_0$.
The main challenge of using W-MSR algorithm lies in the fact that $(r,s)$-robustness is combinatorial and a function of global network states (i.e., the states of all robots). Existing approaches for maintaining these properties typically require obtaining global state information through inter-agent communication. However, such communication becomes unreliable in the presence of malicious agents. Thus, we present an alternative sufficient condition that is locally controllable.
\begin{problem}
\label{prob:prob1}
Given a network $\Gcal(t)=(\Vcal, \Ecal(t))$ under an $F$-total attack at time $t$, let $\delta_{\Lcal}(\Gcal(t))=\min_{i\in \Lcal}(\delta_i(\Gcal(t)))$ be the minimum degree of normal agents at time $t$. We aim to find the lower bound for $\delta_{\Lcal}(\Gcal(t))$ $\forall t \geq t_0$ in which resilient consensus is guaranteed.
\end{problem}

Addressing~\Cref{prob:prob1} reduces the challenge of ensuring resilient consensus with $F$ malicious agents to ensuring each normal robot maintains its number of neighbors above a threshold $F'\in \Z_+$. The value $F'$, which depends on $F$, is provided in~\Cref{thm:rs_robust}. Given~\Cref{assum:attack}, we solve:
\begin{problem}
\label{prob:prob2}
 Let $\Gcal(t_0)=(\Vcal, \Ecal(t_0))$ be a network under an $F$-total attack, where $\delta_{\Lcal}(\Gcal(t_0))\geq F'$. Let $u_{i,\textup{des}}$ be the desired controller for agent $i\in \Vcal$. We design a distributed control strategy for all robots that requires only locally available information to ensure $\delta_{\Lcal}(\Gcal(t))\geq F'$ without collisions, while minimally deviating from $u_{i,\textup{des}}, \ \forall t\geq t_0$. 
\end{problem}
We propose a control strategy (given in~\eqref{seq:qp}) such that when all agents implement it using local information, they together ensure that normal robots maintain at least $F'$ neighbors without collisions under some assumptions, thereby achieving resilient consensus.

\section{Degree and Resilient Consensus}
\label{sec:minimum_degree}
Consider a graph $\Gcal(t)=(\Vcal, \Ecal(t))$ formed by $n$ robots' states $\mathbf x$ at time $t \in \R_{\geq 0}$. In this section, we solve~\Cref{prob:prob1} by showing that having $\delta_{\Lcal}(\Gcal(t))=\min_{i\in \Lcal}(\delta_i(\Gcal(t)))$ above some given threshold $F' \in \Z_+$ $\forall t\geq t_0$ guarantees resilient consensus. The main advantage of using $\delta_{\Lcal}(\Gcal(t))$ is that normal robots can directly control and observe their influence on it as they navigate at time $t$. The condition is given in~\Cref{thm:rs_robust}, but before that, we provide a useful result from~\cite[Property 5.23]{leblanc-thesis} that characterizes the $(r,s)$-robustness of $\Gcal(t)$ in terms of its minimum degree. 
\begin{lemma}
    Let $\mathcal G(t)=(\mathcal V, \mathcal E(t))$ be a graph with the minimum degree of~$\delta_{\min}(\mathcal G(t))$ at time $t$. Then, if $\delta_{\min}(\Gcal(t))\geq \lfloor \frac n 2 \rfloor-1$, $\mathcal G(t)$ is $(\delta_{\min}(\mathcal G(t))-\lfloor{\frac n 2}\rfloor+1,s)$-robust, for all $1\leq s\leq n$~\cite{leblanc-thesis}.
    \label{lem:rs_robust}
\end{lemma}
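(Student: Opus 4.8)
The plan is to verify the defining condition of $(r,s)$-robustness in \Cref{def:robustness} directly, taking $r := \delta_{\min}(\Gcal(t)) - \lfloor \tfrac n2\rfloor + 1$. First I would note that the hypothesis $\delta_{\min}(\Gcal(t)) \geq \lfloor \tfrac n2\rfloor - 1$ guarantees $r \geq 0$, so $r$ is an admissible parameter in \Cref{reachability}; and if $r = 0$ the claim is immediate because $\mathcal X^0_{\Scal}(t) = \Scal$ for every $\Scal$, so I would henceforth assume $r \geq 1$. Fixing an arbitrary pair of nonempty, disjoint subsets $\Scal_1, \Scal_2 \subset \Vcal$, the goal is to show that at least one of the first two bullets of \Cref{def:robustness} holds; this makes the third bullet, and hence the particular value of $s$, irrelevant.

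The crux is a degree-based lower bound on the number of out-of-set neighbors of any vertex living in a \emph{small} set. For nonempty $\Scal \subset \Vcal$ and $v \in \Scal$, simplicity of $\Gcal(t)$ gives at most $|\Scal| - 1$ neighbors of $v$ inside $\Scal$, whence
\[
|\Ncal_v(t) \setminus \Scal| \;\geq\; \delta_v(\Gcal(t)) - (|\Scal| - 1) \;\geq\; \delta_{\min}(\Gcal(t)) - |\Scal| + 1 .
\]
If $|\Scal| \leq \lfloor \tfrac n2\rfloor$, the right-hand side is at least $\delta_{\min}(\Gcal(t)) - \lfloor \tfrac n2\rfloor + 1 = r$, so every $v \in \Scal$ belongs to $\mathcal X^r_{\Scal}(t)$, i.e.\ $|\mathcal X^r_{\Scal}(t)| = |\Scal|$.

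To finish, I would argue that at least one of $\Scal_1, \Scal_2$ has size at most $\lfloor \tfrac n2\rfloor$. Since they are disjoint subsets of the $n$-element vertex set, $|\Scal_1| + |\Scal_2| \leq n$; if both had size at least $\lfloor \tfrac n2\rfloor + 1$, then $|\Scal_1| + |\Scal_2| \geq 2\lfloor \tfrac n2\rfloor + 2 \geq n + 1$ (a one-line check for $n$ even and $n$ odd), contradicting the bound. So, say, $|\Scal_1| \leq \lfloor \tfrac n2\rfloor$, and the previous paragraph yields $|\mathcal X^r_{\Scal_1}(t)| = |\Scal_1|$ --- the first bullet of \Cref{def:robustness}. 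As $\Scal_1, \Scal_2$ were arbitrary, $\Gcal(t)$ is $(r,s)$-robust for every $s$, in particular for all $1 \leq s \leq n$.

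I do not anticipate a genuine obstacle: the argument is elementary counting. The only mildly delicate points are dispatching the degenerate case $r = 0$ up front so that $r$ is a valid reachability parameter, and the parity bookkeeping in the pigeonhole inequality $2\lfloor \tfrac n2\rfloor + 2 \geq n+1$; everything else is a direct consequence of the minimum-degree hypothesis together with the fact that $\mathcal X^r_{\Scal}(t) \subseteq \Scal$ always holds.
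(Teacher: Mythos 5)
Your proof is correct: the pigeonhole observation that one of the two disjoint sets has at most $\lfloor \tfrac n2\rfloor$ vertices, combined with the bound $|\Ncal_v(t)\setminus \Scal| \geq \delta_{\min}(\Gcal(t)) - |\Scal| + 1$ for $v \in \Scal$, forces the first or second bullet of the robustness definition to hold for every pair, independently of $s$. The paper itself gives no proof and simply cites this as Property 5.23 of the referenced thesis, and your argument is exactly the standard one used there, with the degenerate case $r=0$ handled cleanly.
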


\Cref{lem:rs_robust} provides a sufficient condition for $(r,s)$-robustness of a graph $\Gcal(t)$ using its minimum degree $\delta_{\min}(\Gcal(t))$ at time $t$. 
In the context of~\Cref{lem:rs_robust}, the guarantees of $(r,s)$-robustness and hence resilient consensus depend on the local connectivity of all nodes, both normal and malicious agents. We relax this dependence to derive a sufficient condition for resilient consensus based solely on the connectivity of the \textit{normal} agents. 


\begin{prop}
    Let $\mathcal G(t)=(\mathcal V, \mathcal E(t))$ be a time-varying graph under an $F$-total malicious attack. If $\delta_{\Lcal}(\Gcal(t))\geq F'=F+\lfloor \frac n 2 \rfloor$ $\forall t\geq t_0$, normal agents are guaranteed to achieve resilient consensus through the W-MSR algorithm with parameter $F$.
\label{thm:rs_robust}    
\end{prop}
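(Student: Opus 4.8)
The plan is to reduce the statement to \Cref{lem:rs_robust} by working with an auxiliary graph obtained from $\Gcal(t)$ by deleting the malicious nodes. Concretely, fix $t\geq t_0$ and let $\Gcal_\Lcal(t)$ be the subgraph of $\Gcal(t)$ induced by the normal set $\Lcal$, with $|\Lcal| = n_\Lcal \geq n - F$. Since each normal agent $i\in\Lcal$ has at least $F'=F+\lfloor n/2\rfloor$ neighbors in $\Gcal(t)$ and at most $F$ of those can be malicious (by the $F$-total assumption), every $i\in\Lcal$ retains at least $F' - F = \lfloor n/2\rfloor$ neighbors inside $\Gcal_\Lcal(t)$. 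Hence $\delta_{\min}(\Gcal_\Lcal(t)) \geq \lfloor n/2\rfloor \geq \lfloor n_\Lcal/2\rfloor$, so \Cref{lem:rs_robust} applies to $\Gcal_\Lcal(t)$: it is $(r,s)$-robust with $r \geq \delta_{\min}(\Gcal_\Lcal(t)) - \lfloor n_\Lcal/2\rfloor + 1 \geq \lfloor n/2\rfloor - \lfloor n_\Lcal/2\rfloor + 1 \geq 1$, and in fact I want to push the bound to $r \geq F+1$ and $s = n_\Lcal$. The arithmetic to check is $\delta_{\min}(\Gcal_\Lcal(t)) - \lfloor n_\Lcal/2\rfloor + 1 \geq \lfloor n/2\rfloor - \lfloor n_\Lcal/2\rfloor + 1$; combined with $n_\Lcal \geq n-F$ this gives $\lfloor n/2\rfloor - \lfloor n_\Lcal/2\rfloor \leq \lceil F/2\rceil \leq F$, which yields $r \geq 1$ but not obviously $r\geq F+1$ — so I will instead keep $\delta_{\min}(\Gcal_\Lcal(t))$ explicit: $r \geq F' - F - \lfloor n_\Lcal/2\rfloor + 1 = \lfloor n/2\rfloor - \lfloor n_\Lcal/2\rfloor + 1$, and since $n_\Lcal \leq n$ we get $\lfloor n_\Lcal/2\rfloor \leq \lfloor n/2\rfloor$, hence $r\geq 1$; to get $r\geq F+1$ I need the stronger count that normal agents keep $\geq F'-F$ neighbors and then bound $\lfloor n_\Lcal/2 \rfloor$ from above by $\lfloor n/2\rfloor - $ (something). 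Let me restructure: take $r = F+1$, $s = n_\Lcal$ as the target, and verify $\delta_{\min}(\Gcal_\Lcal(t)) \geq (F+1) + \lfloor n_\Lcal/2\rfloor - 1 = F + \lfloor n_\Lcal/2 \rfloor$; since $\delta_{\min}(\Gcal_\Lcal(t)) \geq \lfloor n/2\rfloor \geq \lfloor n_\Lcal/2\rfloor$ this requires $\lfloor n/2 \rfloor \geq F + \lfloor n_\Lcal/2\rfloor$, i.e. $\lfloor n/2\rfloor - \lfloor n_\Lcal/2\rfloor \geq F$. This fails in general, so the clean target is rather $r = \lfloor n/2\rfloor - \lfloor n_\Lcal/2\rfloor + 1$ and I must argue this is $\geq F+1$ using the exact relationship between $n_\Lcal$ and $F$ — or, more robustly, I should bound the number of malicious neighbors differently so that the surviving degree is $\geq F + \lfloor n_\Lcal / 2\rfloor$. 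The honest route: normal agents keep $\geq \lfloor n/2\rfloor$ neighbors in $\Gcal_\Lcal(t)$, and I will show this already exceeds $F + \lfloor n_\Lcal/2\rfloor - 1$ by using $n_\Lcal \leq n$ together with the structural fact (from \cite[Lemma 4]{LeBlanc13}) that $F \leq \lfloor (n-1)/2\rfloor$, which gives $\lfloor n/2 \rfloor \geq F$ and after the floor bookkeeping the needed inequality.

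The second step is to transfer $(F+1, n_\Lcal)$-robustness of $\Gcal_\Lcal(t)$ into a resilient-consensus guarantee for the W-MSR algorithm running on the full graph $\Gcal(t)$. Here I invoke the standard W-MSR analysis: because $s = n_\Lcal = |\Lcal|$, \emph{every} normal node lies in the reachable set $\Xcal^{F+1}_{\Scal_k}$ for any partition, so for any two nonempty disjoint normal subsets at least one contains a normal node with $\geq F+1$ neighbors outside it \emph{among the normal nodes}. This is exactly the hypothesis needed to run the convergence argument of \cite[Corollary 1]{LeBlanc13} restricted to $\Lcal$: after each normal agent discards the top-$F$ and bottom-$F$ values, it still retains a neighbor whose value lies strictly between the running extremes $\overline z(t_p)$ and $\underline z(t_p)$ of the normal states, forcing the spread $\overline z(t_p) - \underline z(t_p)$ to contract geometrically along a subsequence, which gives both the consensus limit $M$ and the interval invariance $y_i(t_p)\in[\underline z(0),\overline z(0)]$. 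The key point that must be stated carefully is that the $F$-total bound guarantees at most $F$ malicious agents total, so removing the top-$F$ and bottom-$F$ received values always eliminates every malicious influence while leaving at least one genuinely normal neighbor whose value respects the extremal bounds — this is where $r = F+1$ (not just $r\geq 1$) is essential.

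The main obstacle I anticipate is the floor arithmetic linking $\lfloor n/2\rfloor$, $\lfloor n_\Lcal/2\rfloor$, and $F$: I need $\delta_\Lcal(\Gcal(t)) - F \geq F + \lfloor n_\Lcal/2\rfloor - 1$ to hold for \emph{all} admissible $n_\Lcal \in \{n-F,\dots,n\}$ given the hypothesis $\delta_\Lcal(\Gcal(t)) \geq F + \lfloor n/2\rfloor$. The inequality reduces to $\lfloor n/2\rfloor - \lfloor n_\Lcal/2\rfloor \geq F - 1 $ in the worst case $n_\Lcal = n$, which is false — so the bound $F' = F + \lfloor n/2\rfloor$ must actually be exploited through a sharper neighbor count: a normal agent with $\delta_i \geq F'$ neighbors has at most $F$ malicious ones \emph{only if} there are $F$ malicious agents at all, but the surviving normal degree is $\geq F' - \min(F, \delta_i^{\text{mal}})$; more importantly, I should apply \Cref{lem:rs_robust} to $\Gcal_\Lcal(t)$ with its \emph{own} minimum degree $\delta_{\min}(\Gcal_\Lcal(t)) \geq F' - F = \lfloor n/2\rfloor$ and node count $n_\Lcal$, yielding robustness parameter $r = \lfloor n/2\rfloor - \lfloor n_\Lcal/2\rfloor + 1$; since $n_\Lcal \geq n - F$ we have $\lfloor n_\Lcal/2\rfloor \geq \lfloor (n-F)/2\rfloor \geq \lfloor n/2\rfloor - \lceil F/2\rceil$, so $r \leq \lceil F/2\rceil + 1$ — this goes the wrong way, meaning $(F+1,\cdot)$-robustness of the induced normal subgraph is \emph{not} what I get. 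The resolution, which I will pursue, is to not delete malicious nodes but instead apply \Cref{lem:rs_robust}-style reasoning directly: for any nonempty disjoint $\Scal_1, \Scal_2 \subset \Vcal$, a normal node $i\in\Scal_k\cap\Lcal$ has $|\Ncal_i \setminus \Scal_k| \geq \delta_\Lcal(\Gcal(t)) - |\Scal_k \cap (\Ncal_i \cup\{i\})| + 1 \geq F' - |\Scal_k|$ if $\Scal_k$ is small, and when one of $\Scal_1,\Scal_2,\Vcal\setminus(\Scal_1\cup\Scal_2)$ has size $\leq \lfloor n/2\rfloor$ the standard counting shows every normal node in the small set has $\geq F+1$ exterior neighbors; the W-MSR guarantee then needs the robustness condition to hold only for the partition-into-two-and-discard view, and by symmetry $\min(|\Scal_1|,|\Scal_2|) \leq \lfloor n/2\rfloor$, giving the needed $(F+1)$-reachability of the smaller set's normal nodes. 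This case analysis on which of the three sets is smallest, carried through the floor functions, is the delicate core of the argument.
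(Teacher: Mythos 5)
Your final argument is correct, but it reaches the conclusion by a genuinely different route than the paper, and your instinct to abandon the induced-subgraph reduction is right. The paper's proof first shows the normal-induced subgraph $\Gcal_\Lcal(t)$ is connected via the minimum-degree criterion, and then argues by comparison/contradiction: a graph in which \emph{all} nodes have degree at least $F'$ is $(F+1,F+1)$-robust by \Cref{lem:rs_robust} and hence achieves resilient consensus, and lowering only the malicious nodes' degrees cannot deprive normal nodes of the ``redundancy'' they need. Your plan instead makes that redundancy claim precise and bypasses $(r,s)$-robustness of the full graph entirely: for any nonempty disjoint $\Scal_1,\Scal_2\subset\Vcal$ the smaller set has at most $\lfloor n/2\rfloor$ nodes, so any normal node $i$ it contains has $|\Ncal_i(t)\setminus\Scal_k|\ge \delta_i-(|\Scal_k|-1)\ge F+\lfloor n/2\rfloor-\lfloor n/2\rfloor+1=F+1$ neighbors outside; since in the W-MSR contraction argument the two threshold sets (nodes with values near the running maximum and minimum of the normal states) each contain a normal node by construction, the smaller one always supplies a normal node with $F+1$ exterior neighbors, which is exactly what forces the spread of normal values to contract (the safety half of resilient consensus needs only the $F$-total bound). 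This is arguably tighter than the paper's argument, and it correctly accounts for the fact that the full graph need not be $(F+1,F+1)$-robust when malicious nodes have low degree.

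Three points to clean up when you write it out. First, your intermediate bound ``$\ge F'-|\Scal_k|$'' is off by one; the correct count is $F'-|\Scal_k|+1$, and you need that $+1$ to reach $F+1$. Second, you must say explicitly that the smaller set contains a normal node --- this holds inside the W-MSR convergence argument but not for arbitrary disjoint pairs, so what you prove is a reachability property for sets meeting $\Lcal$, not robustness. Third, the first two-thirds of your plan (deleting malicious nodes and applying \Cref{lem:rs_robust} to $\Gcal_\Lcal(t)$) should be cut: as you yourself discovered, it yields only roughly $(\lceil F/2\rceil+1,s)$-robustness of the subgraph, which is neither strong enough nor the object to which \cite[Corollary~1]{LeBlanc13} applies; the three-set case analysis you mention at the very end is also unnecessary once you observe $\min(|\Scal_1|,|\Scal_2|)\le\lfloor n/2\rfloor$.
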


\begin{proof} Let $\Gcal_{\Lcal}(t) = (\Lcal, \Ecal_{\Lcal}(t))$ be a subgraph of $\Gcal(t)$ induced by the set of normal agents $\Lcal$ with $n':=|\Lcal|=n-F\geq F+1$. First, we analyze $\Gcal_\Lcal(t)$'s connectivity. All $n'$ normal agents must have at least $\lfloor \frac {n} 2 \rfloor = \lfloor \frac {n'+F} 2 \rfloor$ normal neighbors. Then by~\cite[3.8]{Fiedler1973}, $\Gcal_{\Lcal}(t)$ is connected $\forall t\geq t_0$.

Now, consider two time-varying graphs of $n$ nodes: (a) $\Gcal_1(t)=(\Vcal_1,\Ecal_1(t))$ such that $\delta_{\min}(\Gcal_1(t))\geq F'$ $\forall t\geq t_0$ and (b) $\Gcal_2(t)=(\Vcal_2,\Ecal_2(t))$ such that $\delta_{\Lcal}(\Gcal(t))\geq F'$ $\forall t\geq t_0$ and $\exists T\geq t_0$ such that $\delta_{\min}(\Gcal_2(T))<F'$. Since $\delta_{\min}(\Gcal_1(t))\geq F'$ $\forall t \geq t_0$, all agents (both normal and malicious agents) in $\Gcal_1(t)$ have at least $F'$ neighbors $\forall t \geq t_0$. On the contrary, all normal agents in $\Gcal_2(t)$ have at least $F'$ neighbors, whereas at least one malicious agent $i\in \Acal$ in $\Gcal_2(t)$ has less than $F'$ neighbors at $t=T$. By~\Cref{lem:rs_robust}, $\Gcal_1(t)$ is $(F+1,F+1)$-robust for all $t$, which is a sufficient condition for the achievement of resilient consensus under $F$-total malicious attacks~\cite[Corollary~1]{LeBlanc13}. Thus in $\Gcal_1(t)$, normal agents have enough redundancy in information from normal agents by having $F'$ neighbors $\forall t \geq t_0$ to achieve resilient consensus through the W-MSR algorithm. Now, assume to the contrary that resilient consensus is not achieved among normal agents in $\Gcal_2(t)$. Note that resilient consensus is achieved when all normal agents achieve consensus, regardless of the consensus states of malicious agents. Furthermore, in $\Gcal_2(t)$, normal agents have $F'$ or more neighbors $\forall t\geq t_0$. This implies that they also have enough redundancy in information from normal agents $\forall t\geq t_0$ to achieve resilient consensus through W-MSR algorithm, which is a contradiction. 
\end{proof}

Note that the same bound for the entire graph (including both normal and malicious agents) appears in~\cite[Corollary 1]{leblanc2012_continuous}, but in the context of continuous-time consensus dynamics and a different algorithm. 
Based on~\Cref{thm:rs_robust}, to guarantee resilient consensus under $F$-total malicious attacks, it suffices that robots navigate the environment such that \textit{normal robots} have at least $F'=F+\lfloor{\frac n 2}\rfloor$ neighbors $\forall t\geq t_0$. Since the numbers of total robots $n$ and malicious robots $F$ are fixed, each robot $i \in \Vcal$ can compute $F'$ locally. Now we design a control law for the physical states of robots so that normal robots always maintain $F'$ neighbors to achieve resilient consensus without collisions.

\section{Distributed Resilience-Aware Controller}

\label{sec:controller}
In this section, we design a distributed controller for robots' physical states that enables robots to navigate the environment with guarantees of resilient consensus and physical safety. Consider a multi-robot network $\Gcal(t)$ formed by the physical states $\mathbf x(t)$ of $n$ robots at time $t\geq t_0$. By~\Cref{thm:rs_robust}, $\delta_{\Lcal}(\mathcal G(t))\geq F' = F + \lfloor{\frac n 2}\rfloor$ $\forall t\geq t_0$ guarantees resilient consensus through the W-MSR algorithm. Thus, we design a controller that ensures that all normal robots maintain $F'$ or more neighbors without collisions $\forall t\geq t_0$, thereby addressing~\Cref{prob:prob2}.

There exist many works on decentralized control for connectivity maintenance \cite{ de_carli_distributed_2024,bhatia_decentralized_2024, kconnectivity2019}. However, these methods either (a) require robots to obtain states of all robots through inter-agent communication, which becomes unreliable in the presence of malicious agents, or (b) enforce fixed topologies, which may not always be feasible in certain settings or environments. Thus, we design a controller that (a) uses only local information while being robust to malicious sharing and (b) preserves a flexible network formation.

How we construct the controller (given in~\eqref{seq:qp}) is as follows: first, we define constraint functions to enforce degree maintenance and collision avoidance. These functions are then composed into one centralized candidate CBF~\eqref{eq:composition} whose corresponding CBF constraint~\eqref{eq:cbf_constraint} involves global state knowledge and control actions of all robots. Next, we decompose the constraint to create $n$ separate \textit{distributed} constraints~\eqref{eq:split} - one constraint for each robot - that only requires local information. Finally, we construct the controller with these distributed constraints, proving that it guarantees degree maintenance and safety under some assumptions.

\subsection{Constraint with Global State Knowledge}
\subsubsection{Constraint for Degree Maintenance} We define the function that encodes maintenance of $F'$ neighbors for robot $i\in \Vcal$:
\begin{equation}
    h_i(\mathbf x) = \sum_{j\in \Ncal_i(t)} a_{ij}(\mathbf x)-F'.
    \label{eq:h_i}
\end{equation} 
Since $a_{ij}(\mathbf x)$~\eqref{eq:adjacency} is continuously differentiable, $h_i(\mathbf x)$ is also continuously differentiable. Note that although we define $n$ CBFs so that all robots (both normal and malicious) have the same controller by~\Cref{assum:attack}, as will be shown in~\Cref{remark:diff}, the guarantee of maintaining $F'$ neighbors may apply only to normal agents. 

\begin{lemma}
Let $a_{ij}(\mathbf x)$ be computed from~\eqref{eq:adjacency} with $q_1=2+\epsilon$, where $0<\epsilon< \frac {1} {n-1}$, and $q_2\in \R_+$ at time $t$. Then,  $h_i(\mathbf x)\geq 0$ only if $\delta_i(\Gcal(t))\geq F'$ at time $t$.
\label{lem:connectivity}
\end{lemma}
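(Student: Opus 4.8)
The plan is to prove the contrapositive: assuming $\delta_i(\Gcal(t)) < F'$, I will show $h_i(\mathbf x) < 0$. First I unwind the definitions: since $a_{ij}(\mathbf x)=0$ whenever $\Delta_{ij}(\mathbf x)>R$, the condition $h_i(\mathbf x)\ge 0$ is exactly $\sum_{j\in\Ncal_i(t)}a_{ij}(\mathbf x)\ge F'$, a sum of $\delta_i(\Gcal(t))$ terms. The single ingredient I actually need is a uniform per-edge upper bound on the smoothed weight: for any neighbor $j\in\Ncal_i(t)$ we have $\Delta_{ij}(\mathbf x)\le R$, hence the exponent $q_2\big(R^2-\Delta_{ij}(\mathbf x)^2\big)^2$ is nonnegative, so $e^{-q_2(R^2-\Delta_{ij}(\mathbf x)^2)^2}\in(0,1]$, the denominator in \eqref{eq:adjacency} lies in $(1,2]$, and therefore
\[
0\le a_{ij}(\mathbf x)=\frac{q_1}{1+e^{-q_2(R^2-\Delta_{ij}(\mathbf x)^2)^2}}-\frac{q_1}{2}<\frac{q_1}{2}=1+\frac{\eps}{2}.
\]
The strict inequality on the right holds for every finite $q_2$ because $e^{-q_2(\cdots)^2}>0$, and keeping it strict is what makes the slack argument below close.

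Given this bound, $\sum_{j\in\Ncal_i(t)}a_{ij}(\mathbf x)<\delta_i(\Gcal(t))\big(1+\tfrac{\eps}{2}\big)=\delta_i(\Gcal(t))+\delta_i(\Gcal(t))\tfrac{\eps}{2}$. Now I invoke the two remaining facts: a vertex of an $n$-node simple graph has at most $n-1$ neighbors, so $\delta_i(\Gcal(t))\le n-1$; and the hypothesis $\eps<\tfrac1{n-1}$, which gives $\delta_i(\Gcal(t))\tfrac{\eps}{2}\le(n-1)\tfrac{\eps}{2}<\tfrac12$. Hence $\sum_{j\in\Ncal_i(t)}a_{ij}(\mathbf x)<\delta_i(\Gcal(t))+\tfrac12$, so if $h_i(\mathbf x)\ge 0$ then $F'\le\sum_{j\in\Ncal_i(t)}a_{ij}(\mathbf x)<\delta_i(\Gcal(t))+\tfrac12$, i.e. $\delta_i(\Gcal(t))>F'-\tfrac12$. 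Because $\delta_i(\Gcal(t))\in\Z_{\ge0}$ and $F'\in\Z_+$, this forces $\delta_i(\Gcal(t))\ge F'$, which is the claim.

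I do not expect a genuine obstacle here: the statement is an ``only if,'' so no extremal configuration attaining $h_i(\mathbf x)\ge0$ needs to be exhibited, and the integrality of $\delta_i(\Gcal(t))$ supplies the final rounding automatically. The only points requiring care are (i) that the per-edge bound $a_{ij}(\mathbf x)<q_1/2$ is strict, and (ii) recognizing that the assumption $\eps<\tfrac1{n-1}$ is used solely through $\delta_i(\Gcal(t))\le n-1$, so no sharper estimate enters this particular lemma — though it is worth recording in passing that $F'\le n-1$ does hold, since $F\le\lfloor\tfrac{n-1}{2}\rfloor$ and $\lfloor\tfrac{n-1}{2}\rfloor+\lfloor\tfrac n2\rfloor=n-1$, so the degree threshold $F'$ is always attainable.
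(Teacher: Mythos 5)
Your proof is correct and follows essentially the same route as the paper's: a strict per-edge upper bound $a_{ij}(\mathbf x)<q_1/2$, multiplied by $\delta_i(\Gcal(t))\le n-1$ together with $\eps<\tfrac{1}{n-1}$ to keep the total excess over $\delta_i(\Gcal(t))$ strictly below $1$, then integrality of the degree. The only difference is cosmetic — you carry the slightly tighter slack $\tfrac12$ where the paper uses $1$ — so nothing further is needed.
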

\begin{proof}
    With $q_1=2+\epsilon$, where $0<\epsilon< \frac {1} {n-1}$, and $q_2\in \R_+$ at time $t$, $0\leq a_{ij}(\mathbf x) <1+\frac {1} {n-1}$. Then, given $K= \delta_i(\Gcal(t))\leq n-1$,  $\sum_{j \in \Ncal_i(t)} a_{ij}(\mathbf x)<K+ \frac K {n-1}\leq K+1$ $\forall t \geq t_0$. Thus, $\sum_{j\in \Ncal_i(t)} a_{ij}(\mathbf x)\geq F'$ only if $\delta_i(\Gcal(t))\geq F'$.
\end{proof}
With~\Cref{lem:connectivity}, $h_i(\mathbf x)\geq 0$ only if robot $i$ has at least $F'$ neighbors. 
Thus, if $\mathbf x \in \mathcal C_i=\{\mathbf x \in \R^{nm}\mid h_i(\mathbf x)\geq0\}$ at time $t$, all robots are positioned in such a way that $\delta_i(\mathcal G(t)) \geq F'$ at $t$. Since we want $\delta_{\Lcal}(\Gcal(t))\geq F'$ $\forall t \geq t_0$, the safe set of interest is $\Ccal =\cap_{i \in \Lcal} \Ccal_i$. Also,~\eqref{eq:h_i} does not specify which neighbors each robot should connect to, allowing robots the flexibility to form links dynamically. This enables the overall network to be time-varying without fixed topologies.


\subsubsection{Constraint for Inter-Agent Collision Avoidance} 
For realistic deployment, we also incorporate collision avoidance between agents to ensure safe navigation.
In this work, we construct candidate CBFs similar to those in \cite{cavorsi2022, capelli2020} to enforce collision avoidance between agents $i$ and $j$:
\eqn{h^{\textup{col}}_{ij}(\mathbf x) =  \Delta_{ij}(\mathbf x)^2 - \Delta_d^2 \label{eq:collision}.}
When $h_{ij}^{\text{col}}(\mathbf x)\geq 0$, this implies robots $i$ and $j$ are at least $\Delta_d$ away from each other, ensured to be free of collision.

\subsubsection{Composition of Multiple Constraints}
We compose all constraint functions into one candidate CBF. Composition methods are studied in \cite{black2023, glotfelter2017}, and in this paper we use the form inspired by \cite{black2023}. Then, we define:
\begin{equation}
   \phi(\mathbf x,\mathbf w) = 1 - \sum_{i\in \Vcal} E_i(\mathbf x)-\sum_{(i,j) \in \Ecal(t)} E^c_{ij}(\mathbf x),
\label{eq:composition}
\end{equation}
where $E_i(\mathbf x)=e^{-w_r {h}_i(\mathbf x)}$ and $E^c_{ij}(\mathbf x) = e^{-w_ch_{ij}^{\text{col}}(\mathbf x)}$ with a parameter $\mathbf w = [w_r, w_c] \in \R^2_{+}$. Its structure implies that $\phi(\mathbf x,\mathbf w)\geq 0$ only if $ {h}_i(\mathbf x)>0$ and $ h^{\textup{col}}_{ij}(\mathbf x)>0$ $\forall j \in \Ncal_i(t)$ and $\forall i \in \Vcal$. When $\mathbf x \in \mathcal Z(\mathbf w)=\{\mathbf x \in \mathbb R^{nm} \mid \phi(\mathbf x,\mathbf w)\geq0\} \subset \Ccal$, all robots have at least $F'$ neighbors (by~\Cref{lem:connectivity}) and stay at least $\Delta_d$ apart. Also let $\partial \Zcal(\mathbf w)=\{\mathbf x \in \R^{nm} \mid \phi(\mathbf x, \mathbf w)=0\}$. Then, \eqref{eq:composition} is a CBF for system~\eqref{eq:combined_dynamics} if there exists an extended class $\Kcal_{\infty}$ function $\alpha:\R \to \R$ such that
\eqn{
\label{eq:cbf_constraint}
 \sup_{\mathbf u \in \Ucal} \Bigg[ \sum_{i \in \Vcal} &
\bigg(w_r\sum_{k\in \Bcal_i(t)} E_k(\mathbf x)\frac {\partial h_k}{\partial x_i}  \\ &+ w_c\sum_{j\in \Ncal_i(t)} E^c_{ij}(\mathbf x)\frac {\partial h^{\textup{col}}_{ij}}{\partial x_i}\bigg) \dot {x}_i\Bigg] &\geq -\alpha(\phi(\mathbf x,\mathbf w)) \notag,
}
$\forall \mathbf x \in \Zcal(\mathbf w)$. If such $\alpha(\cdot)$ exists, $\frac {\partial \phi}{\partial \mathbf x}(\mathbf x)\neq 0$ $\forall \mathbf x \in \partial \Zcal(\mathbf w)$, and $\mathbf x(t_0) \in \Zcal(\mathbf w)$, then any Lipschitz continuous controller that satisfies~\eqref{eq:cbf_constraint} would make all robots maintain at least $F'$ neighbors without collisions $\forall t\geq t_0$.
The issue with~\eqref{eq:cbf_constraint} is that it is centralized; it depends on the states and dynamics of all agents. A robot $i \in \Vcal$ cannot compute its control input $u_i \in \Ucal_i$ while ensuring the satisfaction of~\eqref{eq:cbf_constraint} without considering the states and control inputs of all other robots. Various approaches exist to decouple CBF constraints into multiple distributed CBF constraints \cite{jankovic_collision_2024, tan2022, bhatia_decentralized_2024}. We adopt the strategy from~\cite{jankovic_collision_2024, bhatia_decentralized_2024} by dividing the responsibility of satisfying~\eqref{eq:cbf_constraint} among robots.

\subsection{Constraint with Local Information}
In this section, we decompose the CBF constraint~\eqref{eq:cbf_constraint} into $n$ separate \textit{distributed} constraints that only require local information. We first examine the terms $\frac{\partial h_k}{\partial x_i}$, $\frac{\partial h^{\textup{col}}_{ij}}{\partial x_i}$, $E_k(\mathbf x)$, and $E^c_{ij}(\mathbf x)$ in~\eqref{eq:cbf_constraint}. Since robot $i\in \Vcal$ can measure physical states of its own and its neighbors, it can locally compute $\frac{\partial h_k}{\partial x_i}$ $\forall k \in \mathcal{B}_i(t)$ as well as $\frac{\partial h^{\textup{col}}_{ij}}{\partial x_i}$ and $E^c_{ij}(\mathbf x)$ $\forall j \in \Ncal_i(t)$. However, it may not have enough information to determine $E_k(\mathbf x)$ (since $\Ncal_k(t)$ and thus $h_k(\mathbf x)$ may be unknown) for $k\in \Bcal_i(t)$. Thus, each robot constructs
\eqn{\hat h_k(t) = c_k(t) -F', \label{eq:shared_connectivity}}
which may or may not accurately represent $h_k(\mathbf x(t))$~\eqref{eq:h_i} at time $t$, depending on whether robot $k$ updates $c_k(t)$ according to~\eqref{eq:continuous_a}. Let $\hat E_k(t)=e^{-w_r \hat h_k(t)}$. Also, for each $i\in \Vcal$, let
\eqn{\label{eq:phi_i}
\phi_i(\mathbf x, \mathbf w)= \frac {1} {n} - \frac {\sum_{k\in \Bcal_i(t)}\hat E_k(t)}{F'+1} - \frac {\sum_{j \in \Ncal_i(t)} E^c_{ij}(\mathbf x)} {2},
}
which defines a set $\Zcal_i(\mathbf w)=\{\mathbf x \in \R^{nm} \mid \phi_i(\mathbf x, \mathbf w)\geq 0\}$. Due to the structure of~\eqref{eq:phi_i}, $\phi_i(\mathbf x, \mathbf w)\geq0$ at time $t$ only when $\hat {h}_k(t)>0$ $\forall k \in \Bcal_i(t)$ and $h^{\textup{col}}_{ij}(\mathbf x)>0$ $\forall j \in \Ncal_i(t)$ at $t$. Note $\tilde \Zcal(\mathbf w)=\cap_{i\in\Vcal} \Zcal_i(\mathbf w)$ is not necessarily a subset of $\Zcal(\mathbf w)$ as $\hat h_i(t)$ may not be equal to $h_i(\mathbf x(t))$ for some $i\in \Acal$. Now, we present the distributed CBF constraint for agent $i\in \Vcal$:

\eqn{
\label{eq:split}
 \bigg(&\underbrace{w_r\sum_{k\in \Bcal_i(t)} \hat E_k(t) \frac {\partial h_k}{\partial x_i}  + w_c\sum_{j\in \Ncal_i(t)} E^c_{ij}(\mathbf x)\frac {\partial h^{\textup{col}}_{ij}}{\partial x_i}\bigg)}_{H_i(\mathbf x)} \dot {x}_i  \\ & \geq -\alpha(\phi_i(\mathbf x, \mathbf w))\notag,
}where we assume $H_i(\mathbf x)\neq 0, \forall \mathbf x \in \partial \tilde \Zcal_i(\mathbf w)=\{\mathbf x\in \tilde \Zcal(\mathbf w)\mid \phi_i(\mathbf x, \mathbf w)=0\}$, $\forall i \in \Vcal$.

\begin{lemma}
Let $\Gcal(t)=(\Vcal, \Ecal(t))$ be a network consisting of $n$ agents with dynamics \eqref{eq:dynamics} and collective states $\mathbf x\in \tilde \Zcal(\mathbf w)$ for $\mathbf w=[w_r,w_c]\in \R_+^2$ at time $t\geq t_0$. Let $\alpha:\R\to \R$ be a linear extended class $\Kcal_{\infty}$ function. Suppose that in~\eqref{eq:composition} and \eqref{eq:cbf_constraint}, $E_i(\mathbf x(t))$ is replaced by $\hat E_i(t)$. Then,~\eqref{eq:split} $\forall i \in \Vcal$ is feasible. Also, if $u_i(t)$ satisfies~\eqref{eq:split} $\forall i\in \Vcal$ at $t$, $\mathbf u(t)$ satisfies~\eqref{eq:cbf_constraint} at $t$.
\label{lem:equivalence}
\end{lemma}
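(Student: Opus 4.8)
\emph{Plan.} The statement has two halves: feasibility of the $n$ decoupled constraints~\eqref{eq:split} on $\tilde\Zcal(\mathbf w)$, and the implication ``\eqref{eq:split} holds $\forall i$'' $\Rightarrow$ ``\eqref{eq:cbf_constraint} holds once every $E_k$ is replaced by $\hat E_k$''. I would prove the first by a decoupling argument and the second by summing the inequalities and reducing everything to the scalar comparison $\phi\ge\sum_i\phi_i$, which follows from a double-counting identity together with a minimum-degree bound.

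\emph{Feasibility.} Fix $\mathbf x\in\tilde\Zcal(\mathbf w)$. For each $i$, \eqref{eq:split} is a single affine inequality in $u_i=\dot x_i$ only, and different agents' constraints act on disjoint blocks of $\mathbf u$, so the family is jointly feasible iff each member is. Since $\mathbf x\in\tilde\Zcal(\mathbf w)$ gives $\phi_i(\mathbf x,\mathbf w)\ge0$ and $\alpha$ is extended class~$\Kcal_\infty$ with $\alpha(0)=0$, the right-hand side $-\alpha(\phi_i(\mathbf x,\mathbf w))$ is $\le0$, so any $u_i\in\Ucal_i$ with $H_i(\mathbf x)u_i\ge0$ works; such a $u_i$ exists because $\{u_i:H_i(\mathbf x)u_i\ge-\alpha(\phi_i(\mathbf x,\mathbf w))\}$ is a nonempty closed half-space (all of $\R^{m}$ when $H_i(\mathbf x)=0$), which is nontrivial on $\partial\tilde\Zcal_i(\mathbf w)$ by the standing hypothesis $H_i(\mathbf x)\ne0$. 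This settles the first half.

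\emph{Aggregation.} Summing~\eqref{eq:split} over $i\in\Vcal$, the left-hand side is $\sum_{i\in\Vcal}H_i(\mathbf x)\dot x_i$, which by the definition of $H_i$ is exactly the bracket of~\eqref{eq:cbf_constraint} after $E_k\mapsto\hat E_k$, evaluated at $\mathbf u$; the right-hand side is $-\sum_{i\in\Vcal}\alpha(\phi_i(\mathbf x,\mathbf w))$. Since $\alpha$ is \emph{linear}, the claim reduces to $\phi(\mathbf x,\mathbf w)\ge\sum_{i\in\Vcal}\phi_i(\mathbf x,\mathbf w)$, with $\phi=1-\sum_{k\in\Vcal}\hat E_k(t)-\sum_{(i,j)\in\Ecal(t)}E^c_{ij}(\mathbf x)$ the substituted composition. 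A routine double counting—using $E^c_{ij}=E^c_{ji}$ with each undirected edge visited twice for the collision terms, and $\Bcal_k(t)=\Ncal_k(t)\cup\{k\}$ with $\Gcal(t)$ simple so that $\hat E_k$ occurs $\delta_k(\Gcal(t))+1$ times for the degree terms—gives $\sum_{i\in\Vcal}\phi_i(\mathbf x,\mathbf w)=1-\tfrac1{F'+1}\sum_{k\in\Vcal}(\delta_k(\Gcal(t))+1)\hat E_k(t)-\sum_{(i,j)\in\Ecal(t)}E^c_{ij}(\mathbf x)$, hence $\phi(\mathbf x,\mathbf w)-\sum_{i\in\Vcal}\phi_i(\mathbf x,\mathbf w)=\sum_{k\in\Vcal}\hat E_k(t)\bigl(\tfrac{\delta_k(\Gcal(t))+1}{F'+1}-1\bigr)$. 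This is nonnegative once $\delta_k(\Gcal(t))\ge F'$ for every $k\in\Vcal$. For normal $k$, $\phi_k(\mathbf x,\mathbf w)\ge0$ forces $\hat E_k(t)\le\tfrac{F'+1}{n}\le1$ (using $F'+1\le n$, which follows from $F\le\lfloor\frac{n-1}2\rfloor$), so $\hat h_k(t)\ge0$, so $c_k(t)\ge F'$; combining this with~\eqref{eq:continuous_a} and the bound $\sum_{j\in\Ncal_k(t)}a_{kj}(\mathbf x)<\delta_k(\Gcal(t))+1$ from the proof of~\Cref{lem:connectivity} yields $\delta_k(\Gcal(t))\ge F'$. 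For misbehaving $k$ the same bound must come from~\Cref{assum:attack}, under which those robots execute the normal physical controller and so also keep at least $F'$ neighbors. Chaining, $\sum_{i\in\Vcal}H_i(\mathbf x)\dot x_i\ge-\sum_{i\in\Vcal}\alpha(\phi_i(\mathbf x,\mathbf w))\ge-\alpha(\phi(\mathbf x,\mathbf w))$, i.e.\ the substituted~\eqref{eq:cbf_constraint} holds.

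\emph{Expected obstacle.} The only non-mechanical step is the last one: turning the exact identity $\phi-\sum_i\phi_i=\sum_{k\in\Vcal}\hat E_k\bigl(\tfrac{\delta_k+1}{F'+1}-1\bigr)$ into a nonnegativity statement requires the minimum-degree bound $\delta_k(\Gcal(t))\ge F'$ for \emph{all} agents. For normal agents it falls out of $\mathbf x\in\tilde\Zcal(\mathbf w)$ and~\Cref{lem:connectivity}, but for misbehaving agents—who may broadcast a $c_k$ decoupled from their actual degree—it has to be tied back carefully to~\Cref{assum:attack}; making that link airtight is the crux. The feasibility argument and the combinatorial identities are routine by comparison.
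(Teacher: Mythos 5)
Your argument follows the paper's own route almost exactly: feasibility from $\phi_i(\mathbf x,\mathbf w)\ge 0$ on $\tilde\Zcal(\mathbf w)$ (the paper simply takes $u_i=0$), and aggregation by summing~\eqref{eq:split} over $i$, double-counting, and reducing to $\phi(\mathbf x,\mathbf w)\ge\sum_{i\in\Vcal}\phi_i(\mathbf x,\mathbf w)$ via linearity of $\alpha$. Your identity $\phi-\sum_i\phi_i=\sum_{k\in\Vcal}\hat E_k(t)\bigl(\tfrac{\delta_k(\Gcal(t))+1}{F'+1}-1\bigr)$ is a correct and more explicit version of the paper's inequalities~\eqref{eq:proof1}--\eqref{eq:proof2}.

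The one genuine problem is the step you yourself flag as the crux. Nonnegativity of that sum requires $\delta_k(\Gcal(t))\ge F'$ for \emph{every} $k\in\Vcal$. Your derivation for normal $k$ (from $\phi_k\ge 0$, $F'+1\le n$, and the bound in the proof of \Cref{lem:connectivity}) is sound, but the justification for misbehaving $k$ via \Cref{assum:attack} does not go through: that assumption only says malicious robots execute the same physical control law, and executing~\eqref{seq:qp} does not force a robot that inflates its broadcast $c_k$ to actually retain $F'$ neighbors. The paper concedes exactly this point elsewhere: $\tilde\Zcal(\mathbf w)\not\subseteq\Zcal(\mathbf w)$, \Cref{remark:diff} notes that overstating agents ``may fail to maintain $F'$ neighbors,'' and the simulations exhibit one such robot dropping below $F'=7$ neighbors. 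At a state $\mathbf x\in\tilde\Zcal(\mathbf w)$ where some malicious $k$ has $\delta_k<F'$ while all other degrees equal $F'$, your difference $\phi-\sum_i\phi_i$ is strictly negative and the aggregation fails. To be fair, the paper's own proof has the identical hole---it asserts $|\Bcal_k(t)|\ge F'+1$ for all $k$ with no justification---so you have correctly located the soft spot; but your patch does not close it, and as stated the second claim of the lemma appears to need the additional hypothesis $\delta_{\min}(\Gcal(t))\ge F'$ (or a restriction such as $\epsilon_k\le 0$ on the connectivity attacks) rather than only $\mathbf x\in\tilde\Zcal(\mathbf w)$.
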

\begin{proof}
First, we prove the feasibility of~\eqref{eq:split}. Since $\mathbf x \in \tilde \Zcal(\mathbf w)$, we know $\phi_i(\mathbf x, \mathbf w)\geq 0$ $\forall i \in \Vcal$. Then, $u_i(t)=0$ satisfies~\eqref{eq:split} $\forall i \in \Vcal$, regardless of $\alpha(\cdot)$.
Now we prove that satisfying~\eqref{eq:split} $\forall i \in \Vcal$ also implies the satisfaction of~\eqref{eq:cbf_constraint}. Since  $|\Bcal_k(t)|\geq F'+1$ $\forall k\in \Bcal_i(t)$, $\forall i \in \Vcal$, \eqn{\label{eq:proof1}
\sum_{i \in \Vcal}\hat E_i(t)\leq \sum_{i \in \Vcal}\frac {\sum_{k \in \Bcal_i(t)}\hat E_k(t)}{F'+1}.}
Also, because two agents $i,j$ are involved in $h_{ij}^{\text{col}}(\mathbf x)$,
\eqn{\label{eq:proof2}
\sum_{(i,j) \in \Ecal(t)} E^c_{ij}(\mathbf x)= 
\sum_{i \in \Vcal}\frac {\sum_{j \in \Ncal_i(t)} E^c_{ij}(\mathbf x)} {2}.}
Thus, combining~\eqref{eq:proof1} and~\eqref{eq:proof2}, we get
\eqn{\phi(\mathbf x,\mathbf w)\geq \sum_{i\in \Vcal} \phi_i(\mathbf x,\mathbf w).}
Therefore, with any linear function $\alpha(\cdot)$, we can conclude
\seqn{
 \sup_{\mathbf u \in \Ucal} \Bigg[ \sum_{i \in \Vcal} 
\bigg(w_r\sum_{k\in \Bcal_i(t)} \hat E_k(t)\frac {\partial h_k}{\partial x_i} \\  + w_c\sum_{j\in \Ncal_i(t)} E^c_{ij}(\mathbf x)\frac {\partial h^{\textup{col}}_{ij}}{\partial x_i}\bigg) \dot {x}_i\Bigg] & = \notag \\
 \sum_{i\in \Vcal}\sup_{u_i \in \Ucal_i} \Bigg[ \bigg(w_r\sum_{k\in \Bcal_i(t)} 
 \hat E_k(t)\frac {\partial h_k}{\partial x_i} \\ + w_c\sum_{j\in \Ncal_i(t)} E^c_{ij}(\mathbf x)\frac {\partial h^{\textup{col}}_{ij}}{\partial x_i}\bigg)\dot{x}_i \Bigg] & \geq -\sum_{i\in \Vcal}\alpha(\phi_i(\mathbf x, \mathbf w)) \notag\\ &\geq  -\alpha(\phi(\mathbf x, \mathbf w)),
\label{seq:proof}
}
which completes the proof.
\end{proof}

\Cref{lem:equivalence} shows that each robot $i\in \Vcal$ satisfying its own constraint~\eqref{eq:split} leads to all robots collectively satisfying~\eqref{eq:cbf_constraint} with $h_k(\mathbf x(t))$ being replaced by $\hat h_k(t)$, \textit{regardless of whether $\hat h_k(t)=h_k(\mathbf x(t))$ or not}. We also highlight that all the components of~\eqref{eq:split} for robot $i\in \Vcal$ can be derived only using its local information. Using this fact, we now design our distributed controller.

\subsection{Controller Design}
Now, we present our controller. Let $u_{i,\text{des}}(x_i)$ be the desired controller of robot $i$. Our CBF-based Quadratic Program (CBF-QP) controller for robot $i\in \Vcal$ is:
\seqnone{\label{seq:qp}    u^*_{i}(\mathbf x)= &\arg \min\limits_{u_i\in \Ucal_i}||u_{i,\text{des}}(x_i)-u_i||^2 
    \\
     \text{s.t. } 
     & \eqref{eq:split} \text{ is satisfied.}}

 \begin{assumption}
    $u^*_i(\mathbf x)$~\eqref{seq:qp} is locally Lipschitz continuous with respect to $\mathbf x$.
    \label{assum:local_lip}
\end{assumption}

 \begin{remark}
   In general, CBF-QPs with input constraints do not yield locally Lipschitz continuous solutions, as the input constraints effectively act as additional constraints. The local Lipschitz continuity of QPs with multiple constraints have been studied in~\cite{still2018lectures, mestres_regularity_2025, agrawal2025reformulations}.

\end{remark}

Now, we prove that our controller ensures both resilient consensus and physical safety under a specific class of malicious attacks on $c_i(t)$:

\begin{theorem}
    Let all conditions in~\Cref{lem:connectivity}-\ref{lem:equivalence} and~\Cref{assum:attack}-\ref{assum:local_lip} hold. Suppose all misbehaving agents $i\in \Acal$ update $c_i(t)$ according to $c_i(t) = \sum_{j\in \Ncal_i(t)}a_{ij}(\mathbf x(t)) +\epsilon_i$, $\forall t\geq t_0$, where $\epsilon_i\in \R$, such that $\mathbf x(t_0)\in \tilde \Zcal(\mathbf w)$. If each robot $i\in \Vcal$ computes its control input $u_i$ using~\eqref{seq:qp}, then $\forall t \geq t_0$, all robots move such that (a) $\delta_{\Lcal}(\Gcal(t))\geq F'$, (b) no collisions occur, and (c) resilient consensus is guaranteed.
    \label{thm:cbf-qp}
\end{theorem}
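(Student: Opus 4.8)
The plan is to chain together the distributed-constraint machinery of Lemma~\ref{lem:equivalence}, the CBF forward-invariance result quoted after \eqref{eq:cbf}, and the degree/consensus bridge of \Cref{lem:connectivity} and \Cref{thm:rs_robust}. First I would observe that under the hypothesis $c_i(t) = \sum_{j\in\Ncal_i(t)}a_{ij}(\mathbf x(t))+\epsilon_i$ for \emph{all} $i\in\Vcal$ (normal agents have $\epsilon_i=0$ by \eqref{eq:continuous_a}, malicious ones have a constant offset), the quantity $\hat h_k(t)=c_k(t)-F'$ equals $h_k(\mathbf x(t))+\epsilon_k$ — a continuously differentiable function of $\mathbf x$ with the \emph{same} gradient $\partial h_k/\partial x_i$ as $h_k$. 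Hence $\hat E_k(t)=e^{-w_r(h_k(\mathbf x(t))+\epsilon_k)}=e^{-w_r\epsilon_k}E_k(\mathbf x(t))$ is, along trajectories, a smooth function of $\mathbf x$, and the constraint \eqref{eq:split} is exactly the CBF constraint for the modified composite barrier $\tilde\phi_i(\mathbf x,\mathbf w)=\frac1n-\frac{\sum_{k\in\Bcal_i}e^{-w_r\epsilon_k}E_k(\mathbf x)}{F'+1}-\frac{\sum_{j\in\Ncal_i}E^c_{ij}(\mathbf x)}{2}$. So each $\phi_i$ is an honest time-invariant CBF candidate (with the appropriate constant rescaling absorbed), not merely a shared scalar.

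Second, apply \Cref{lem:equivalence}: since $\mathbf x(t_0)\in\tilde\Zcal(\mathbf w)=\cap_i\Zcal_i(\mathbf w)$ and each $u^*_i$ from \eqref{seq:qp} is feasible (the lemma gives feasibility, with $u_i=0$ always admissible) and locally Lipschitz by \Cref{assum:local_lip}, the controller $u^*_i\in K_{\mathrm{cbf}}$ for the barrier $\phi_i$; invoking \cite[Theorem~2]{ames_cbf} on each $\Zcal_i(\mathbf w)$ — using the assumed non-degeneracy $H_i(\mathbf x)\neq0$ on $\partial\tilde\Zcal_i(\mathbf w)$ as the required gradient condition — yields $\phi_i(\mathbf x(t),\mathbf w)\geq0$ for all $t\geq t_0$ and all $i$, i.e. $\mathbf x(t)\in\tilde\Zcal(\mathbf w)$ for all $t\geq t_0$. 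From the structure of \eqref{eq:phi_i}, $\phi_i\geq0$ forces $\hat h_k(t)>0$ for all $k\in\Bcal_i(t)$ and $h^{\mathrm{col}}_{ij}(\mathbf x)>0$ for all $j\in\Ncal_i(t)$. The collision part gives $\Delta_{ij}(\mathbf x)>\Delta_d$ for every active edge, hence (b). For (a), note each normal agent $i\in\Lcal$ appears in some $\Bcal_i(t)$ with $\epsilon_i=0$, so $\hat h_i(t)=h_i(\mathbf x(t))>0$, i.e. $h_i(\mathbf x(t))\geq0$; \Cref{lem:connectivity} then gives $\delta_i(\Gcal(t))\geq F'$, so $\delta_{\Lcal}(\Gcal(t))\geq F'$ for all $t\geq t_0$. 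Finally (c) is immediate from \Cref{thm:rs_robust}: $\delta_{\Lcal}(\Gcal(t))\geq F'=F+\lfloor n/2\rfloor$ for all $t\geq t_0$ guarantees the W-MSR algorithm achieves resilient consensus among the normal agents under the $F$-total attack, where \Cref{assum:attack} ensures the malicious agents' physical trajectories (also governed by \eqref{seq:qp}) do not violate these set-invariance arguments.

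The subtle point — and the step I expect to need the most care — is the interplay between the \emph{per-agent} invariance of $\Zcal_i(\mathbf w)$ and the claim that \eqref{eq:split} genuinely renders each $\Zcal_i(\mathbf w)$ invariant. \Cref{lem:equivalence} only asserts feasibility of \eqref{eq:split} and that collective satisfaction implies \eqref{eq:cbf_constraint}; it does \emph{not} by itself say that $u^*_i$ keeps $\phi_i$ nonnegative, because $\phi_i$ depends on $\mathbf x$ (the neighbors' states), and robot $i$ does not control $\dot x_k$ for $k\neq i$. The resolution is that \eqref{eq:split} is written so that $\dot\phi_i = -\frac{1}{F'+1}\sum_{k\in\Bcal_i}w_r\hat E_k\sum_{m\in\Vcal}\frac{\partial h_k}{\partial x_m}\dot x_m - \frac12\sum_{j\in\Ncal_i}\dot E^c_{ij}$, and the cross terms $\partial h_k/\partial x_m$ for $m\neq i$ are precisely the terms that \emph{other} agents' constraints \eqref{eq:split} control — so invariance of $\tilde\Zcal(\mathbf w)$ is a \emph{joint} property of the whole constraint system, proved by summing: $\sum_i\dot\phi_i\geq -\sum_i\alpha(\phi_i)\geq-\alpha(\sum_i\phi_i)$ when every $\phi_i\geq0$ and $\alpha$ is linear, combined with $\phi\geq\sum_i\phi_i$ from the proof of \Cref{lem:equivalence}. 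I would therefore argue invariance at the level of $\tilde\phi:=\sum_i\phi_i$ (or equivalently use \cite[Theorem~2]{ames_cbf} on $\phi$ via the collective constraint), deduce $\phi(\mathbf x(t),\mathbf w)\geq0$, and then pass back to the individual $\hat h_k>0$ and $h^{\mathrm{col}}_{ij}>0$ from the sum being nonnegative — being careful that the exponential-barrier structure of \eqref{eq:composition}/\eqref{eq:phi_i} makes each summand strictly positive whenever the total is nonnegative, which is exactly what licenses the "$\phi\geq0$ only if each constraint strictly holds'' statements already used in the text. A secondary bookkeeping item is confirming $|\Bcal_k(t)|\geq F'+1$ throughout (needed for \eqref{eq:proof1}), which follows once (a) is in hand since a normal neighbor set has at least $F'$ members plus the agent itself; for a malicious $k$, Assumption~\ref{assum:attack} plus the inductive invariance still gives it enough neighbors, or alternatively one restricts the bound \eqref{eq:proof1} to the indices that matter — this should be stated explicitly to avoid circularity.
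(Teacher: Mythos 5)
Your proposal follows essentially the same route as the paper's proof: \Cref{lem:equivalence} to pass from the distributed constraints~\eqref{eq:split} to the collective constraint~\eqref{eq:cbf_constraint}, forward invariance of the resulting set, and then \Cref{lem:connectivity} and \Cref{thm:rs_robust} to conclude (a)--(c), with the identity $\hat h_i(t)=h_i(\mathbf x(t))$ for normal agents (and $\hat h_i(t)=h_i(\mathbf x(t))+\epsilon_i$ for malicious ones) doing the work in step (a). The extra care you take with the invariance step --- arguing at the level of $\phi \geq \sum_{i\in\Vcal}\phi_i$ and recovering the individual terms $\hat h_k>0$ and $h^{\textup{col}}_{ij}>0$ from the exponential structure, rather than asserting per-agent invariance of each $\Zcal_i(\mathbf w)$ --- is a legitimate refinement of a step the paper states without detail (it simply asserts that the controllers ``together render $\tilde \Zcal(\mathbf w)$ forward invariant''), and the residual feasibility circularity you flag (that \eqref{eq:split} is only known to be feasible while every $\phi_i\geq 0$ holds) is present in the paper's own argument as well.
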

\begin{proof}
\textbf{(Part 1)} By~\Cref{assum:attack}, all robots compute $u_i(t)$ from~\eqref{seq:qp} $\forall t \geq t_0$, which means $\mathbf u(t)$ from~\eqref{seq:qp} $\forall i \in \Vcal$ satisfies~\eqref{eq:cbf_constraint} $\forall t \geq t_0$ by~\Cref{lem:equivalence}. Since (a) $u^*_i(\mathbf x)$ $\forall i \in \Vcal$ are locally Lipschitz by~\Cref{assum:local_lip} and feasible $\forall \mathbf x\in \tilde \Zcal(\mathbf w)$ by~\Cref{lem:equivalence}, (b) $\mathbf x(t_0) \in \tilde \Zcal(\mathbf w)$, and (c) $H_i(\mathbf x)\neq 0$ $\forall \mathbf x \in \partial \tilde \Zcal_i(\mathbf w)$, the controllers $u^*_i(\mathbf x)$ $\forall i \in \Vcal$ together render $\tilde \Zcal(\mathbf w)$ forward invariant. 

\textbf{(Part 2)}
Because $\tilde \Zcal(\mathbf w)$ is rendered forward invariant, $\forall t \geq t_0$ (i) $h_i(\mathbf x(t))=\hat h_i(t)> 0$ $\forall i\in \Lcal$, (ii) $h_i(\mathbf x(t))=\hat h_i(t)-\epsilon_i> 0$ $\forall i\in \Acal$, and (iii) $h^{\textup{col}}_{ij}(\mathbf x(t))>0$ $\forall i\in \Vcal$ $\forall j \in \Ncal_i(t)$. Then, by~\Cref{lem:connectivity}, robots are guaranteed to move such that $\delta_{\Lcal}(\Gcal(t))\geq F'$ without collision $\forall t \geq t_0$. Also, by~\Cref{thm:rs_robust}, resilient consensus is guaranteed. 
\end{proof}

\begin{remark}
Our analysis in~\Cref{thm:cbf-qp} assumes malicious robots conduct constant biased attacks on $c_i(t)$ (but not necessarily on $y_i(t)$), which are common forms of attacks due to their stealthiness~\cite{bias_attack1}. However, these represent a specific class of malicious attacks. Extending our analysis to broader attack strategies is future work.
\end{remark}

We also examine the performance of our controller~\eqref{seq:qp} in the cases where all robots $i \in \Vcal$ correctly update connectivity levels $c_i(t)$ according to the nominal update rule~\eqref{eq:continuous_a} below:
\begin{corollary}
Let all conditions in~\Cref{thm:cbf-qp} hold. Suppose all robots $i\in \Vcal$ update $c_i(t)$ according to~\eqref{eq:continuous_a}, i.e., $\epsilon_i=0$, $\forall i \in \Acal$, such that $\mathbf x(t_0)\in \tilde \Zcal(\mathbf w)$. If each robot $i\in \Vcal$ computes its control input $u_i$ using~\eqref{seq:qp}, then $\forall t\geq t_0$, all robots move such that (a) $\delta_{\min}(\Gcal(t))\geq F'$, (b) no collisions occur, and (c) resilient consensus is guaranteed.
\label{cor:naive}
\end{corollary}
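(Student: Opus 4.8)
The plan is to obtain~\Cref{cor:naive} as a specialization of~\Cref{thm:cbf-qp} to the case $\epsilon_i=0$ for all $i\in\Acal$. Under this choice the hypotheses of~\Cref{thm:cbf-qp} are met, so its proof applies \emph{verbatim}: the controllers $u_i^*(\mathbf x)$ jointly render $\tilde\Zcal(\mathbf w)$ forward invariant for all $t\geq t_0$, no collisions occur, and resilient consensus is guaranteed. This already delivers claims (b) and (c), as well as the weaker degree bound $\delta_{\Lcal}(\Gcal(t))\geq F'$. The only additional work is to strengthen the degree guarantee from the normal agents to all agents, i.e., to prove (a): $\delta_{\min}(\Gcal(t))\geq F'$ for all $t\geq t_0$.

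First I would note that with $\epsilon_i=0$ the shared connectivity level $c_i(t)=\sum_{j\in\Ncal_i(t)}a_{ij}(\mathbf x(t))$ is the nominal value for \emph{every} agent $i\in\Vcal$, not only for the normal ones; hence $\hat h_i(t)=c_i(t)-F'=h_i(\mathbf x(t))$ for all $i\in\Vcal$ and all $t\geq t_0$, where $h_i$ is as in~\eqref{eq:h_i}. Next I would use the forward invariance of $\tilde\Zcal(\mathbf w)=\cap_{i\in\Vcal}\Zcal_i(\mathbf w)$: for each $i\in\Vcal$, $\mathbf x(t)\in\tilde\Zcal(\mathbf w)$ implies $\phi_i(\mathbf x(t),\mathbf w)\geq 0$, and by the structure of~\eqref{eq:phi_i} this forces $\hat h_k(t)>0$ for all $k\in\Bcal_i(t)$; taking $k=i$ gives $\hat h_i(t)>0$ for every $i\in\Vcal$. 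Combining the two observations, $h_i(\mathbf x(t))>0$ for all $i\in\Vcal$ and all $t\geq t_0$, and~\Cref{lem:connectivity}, which is stated for an arbitrary vertex, then yields $\delta_i(\Gcal(t))\geq F'$ for every $i\in\Vcal$, i.e., $\delta_{\min}(\Gcal(t))=\min_{i\in\Vcal}\delta_i(\Gcal(t))\geq F'$. Conclusion (c) can also be recovered this way, since $\delta_{\Lcal}(\Gcal(t))\geq\delta_{\min}(\Gcal(t))\geq F'$ and~\Cref{thm:rs_robust} applies.

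There is essentially no hard step: removing the adversarial bias makes the locally reported connectivity levels exact, so each per-agent certificate $\Zcal_i(\mathbf w)$ now lower-bounds the \emph{true} degree of agent $i$ rather than only a reported one. The only points that warrant a line of care are checking that~\Cref{lem:connectivity} covers malicious as well as normal vertices (it is stated for any $i\in\Vcal$), and that the substitution $\hat h_i\equiv h_i$ holds identically on $[t_0,\infty)$, so that $\tilde\Zcal(\mathbf w)$ is consistent with the true safe set $\Zcal(\mathbf w)$ through the inequality $\phi(\mathbf x,\mathbf w)\geq\sum_{i\in\Vcal}\phi_i(\mathbf x,\mathbf w)$ established in~\Cref{lem:equivalence}; both are immediate.
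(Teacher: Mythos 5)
Your proposal is correct and follows essentially the same route as the paper: invoke forward invariance of $\tilde{\Zcal}(\mathbf w)$ exactly as in Part~1 of the proof of~\Cref{thm:cbf-qp}, observe that $\epsilon_i=0$ makes $\hat h_i(t)=h_i(\mathbf x(t))>0$ for \emph{all} $i\in\Vcal$ (not just normal agents), and then apply~\Cref{lem:connectivity} and~\Cref{thm:rs_robust} to get $\delta_{\min}(\Gcal(t))\geq F'$, collision avoidance, and resilient consensus. The extra care you take in justifying $\hat h_i(t)>0$ via the structure of~\eqref{eq:phi_i} and $i\in\Bcal_i(t)$ is a harmless elaboration of the same argument.
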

\begin{proof}
\textbf{(Part 1)} By the same argument as in (Part 1) of the proof of~\Cref{thm:cbf-qp}, the controllers $u^*_i(\mathbf x)$ $\forall i \in \Vcal$ together render $\tilde \Zcal(\mathbf w)$ forward invariant.

\textbf{(Part 2)} Because $\epsilon_i=0$ $\forall i \in \Acal$ and $\tilde \Zcal(\mathbf w)$ is rendered forward invariant, $\forall t \geq t_0$ (i) $h_i(\mathbf x(t))=\hat h_i(t)> 0$ $\forall i\in \Vcal$ and (ii) $h^{\textup{col}}_{ij}(\mathbf x(t))>0$ $\forall i\in \Vcal$ $\forall j \in \Ncal_i(t)$. Then, by~\Cref{lem:connectivity}, robots are guaranteed to move such that $\delta_{\min}(\Gcal(t))\geq F'$ without collision $\forall t \geq t_0$. Also, by~\Cref{thm:rs_robust}, resilient consensus is guaranteed. 
\end{proof}
\begin{remark}
\label{remark:diff}
As shown in~\Cref{thm:cbf-qp} and~\Cref{cor:naive}, while resilient consensus and collision avoidance are always ensured, the performance of our designed controller~\eqref{seq:qp} (i.e., the overall connectivity of the network) depends on the behavior of malicious robots. If all robots $i\in \Vcal$ update their connectivity values $c_i(t)$ according to~\eqref{eq:continuous_a}, then \textit{all robots} are guaranteed to maintain at least $F'$ neighbors while avoiding collisions. However, if malicious robots $i \in \Acal$ manipulate ${c}_i(t)$, they may fail to maintain $F'$ neighbors. This also affects how aggressively/conservatively robots can move with our controller. We examine this effect more closely in our simulation results.
\end{remark}

\section{Simulation Results}
\label{sec:sim}
\begin{figure*}[ht!]
    \centering
\includegraphics[width=0.95\linewidth]{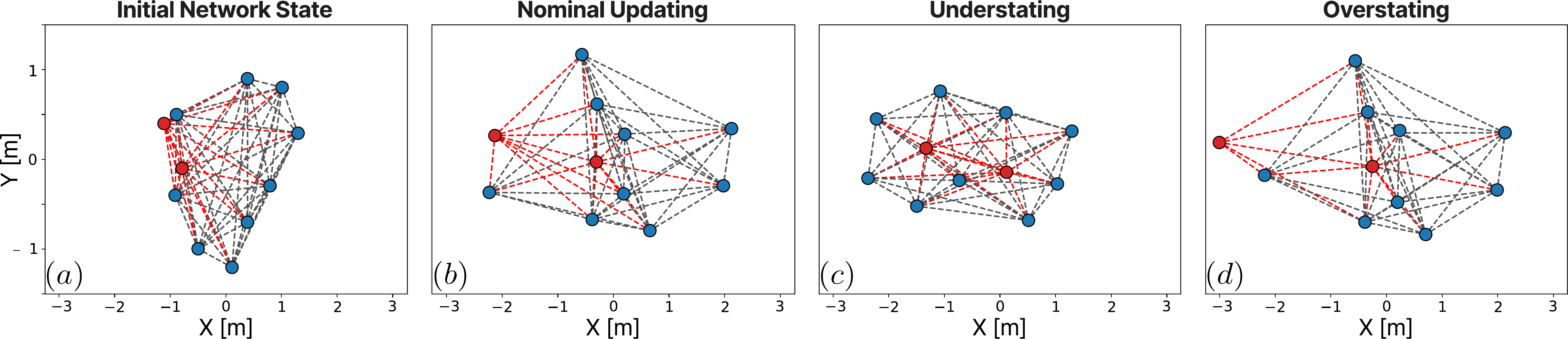}
    \caption{(a) shows the initial robot formation at $t_0 = 0$, while (b)–(d) depict formations at $t = 10$ seconds for nominal updating, understating, and overstating scenarios, respectively. Gray dotted lines are connections between normal robots, while red dotted lines are those involving malicious robots.}
    \label{fig:sim}
\end{figure*}

\begin{figure}
    \centering
    \includegraphics[width=0.97\linewidth]{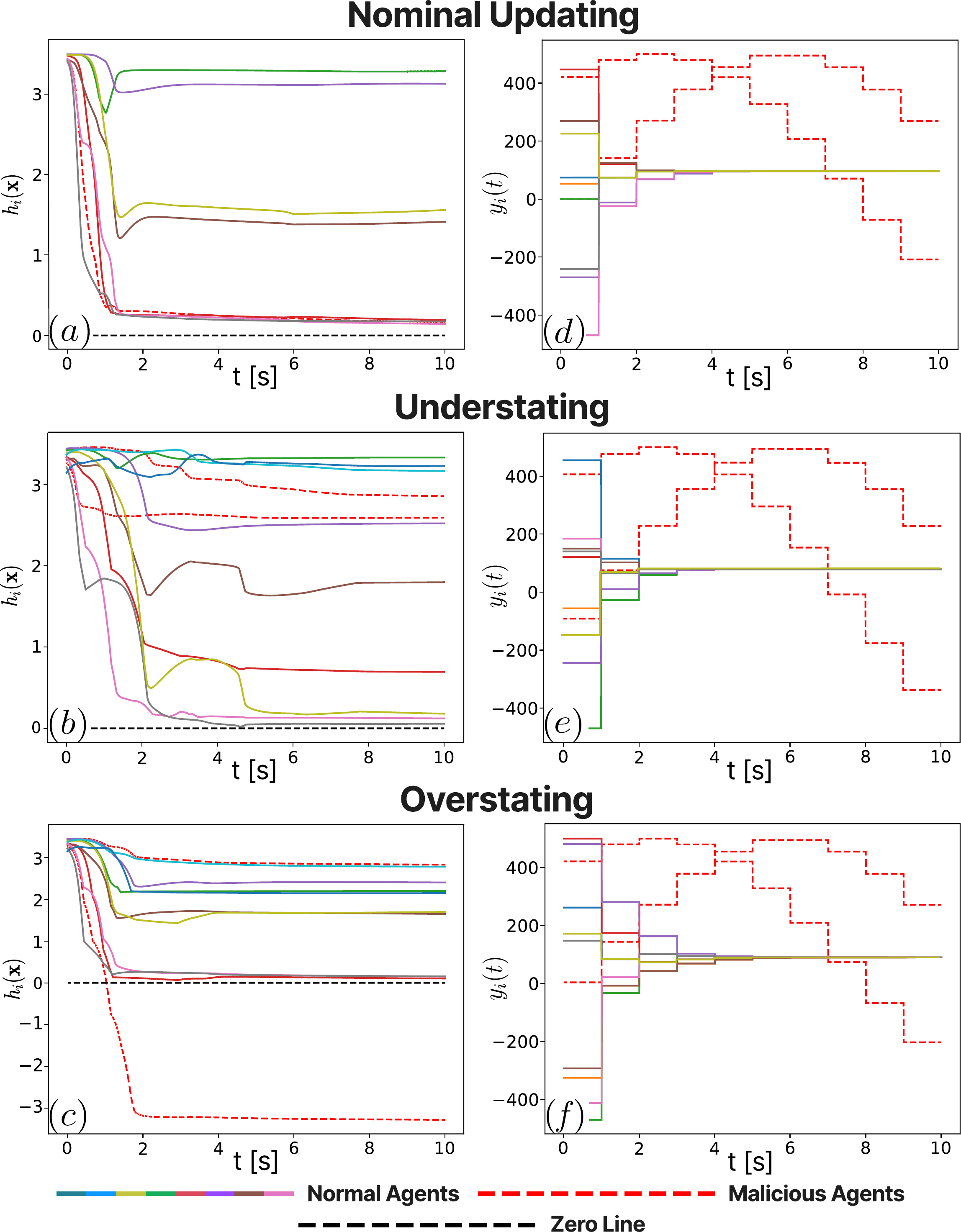}
    \caption{(a)–(c) show the robots' actual values of~\eqref{eq:h_i}, while (d)–(f) present the robots' consensus states, both for the simulations of nominal updating, understating, and overstating scenarios for $t\in[0,10]$, respectively.}
    \label{fig:sim_result}
\end{figure}
Here, we show the effectiveness of our controller~\eqref{seq:qp} through simulations in a $2$D plane. We consider a time-varying network $\Gcal(t)=(\Vcal,\Ecal(t))$, consisting of $n=11$ robots, where edges are formed based on robots' relative distances with a communication range $R=3$ m. Each robot has dynamics~\eqref{eq:dynamics} and an input set $\Ucal_i=[-1.5,15]\times [-1.5,1.5]$. They are simulated as point masses and to maintain a minimum distance of $\Delta_d = 0.3$ m. Each normal robot $i\in \Lcal$ updates its connectivity value $c_i(t)$ according to~\eqref{eq:connectivity} and shares it with its neighbors every $\tau_1 = 0.005$ seconds. It also starts with a random consensus state $y_i(t)\in[-500,500]$ at $t=t_0$, which it shares and updates every $\tau_2 = 0.1$ seconds using W-MSR algorithm thereafter. We examine three scenarios, each involving $F=2$ malicious robots $i\in \Acal$ following different update rules for $c_i(t)$:

\noindent\textbf{1. Nominal Updating:} Follow the nominal law~\eqref{eq:connectivity} to update: $$c_i(t) = \sum_{j\in \Ncal_i(t_l)}a_{ij}(\mathbf x(t_l)), \ \forall t \in [t_l, t_{l+1}),\  \forall l\in \Z_{\geq 0}.$$\noindent\textbf{2. Understating: }Understate their connectivity levels: $$c_i(t) = \sum_{j\in \Ncal_i(t_l)}a_{ij}(\mathbf x(t_l))-2.5, \ \forall t \in [t_l, t_{l+1}), \ \forall l\in \Z_{\geq 0}.$$\noindent\textbf{3. Overstating: }Overstate their connectivity levels:$$c_i(t) = \sum_{j\in \Ncal_i(t_l)}a_{ij}(\mathbf x(t_l))+3.5, \ \forall t \in [t_l, t_{l+1}), \ \forall l\in \Z_{\geq 0}.$$ In all cases, all malicious robots randomly choose and share $y_i(t)\in [-500,500]$ every $\tau_2$ seconds. By~\Cref{thm:rs_robust}, all normal robots have to maintain $F'=2+\big\lfloor \frac {11}{2} \big\rfloor= 7$ or more neighbors to ensure resilient consensus. We evaluate our controller by observing whether robots can achieve resilient consensus even when the robots' desired control inputs directly conflict with maintaining network connectivity, despite malicious robots' adversarial behaviors. With the desired controller $u_{i,\text{des}}(x_i) = \frac {v(x_i)}{\norm{v(x_i)}}$ for each robot $i\in \Vcal$, where
\eqnN{v(x_i) = \begin{cases}
    (-1)^i\cdot[100, 0]^T - x_i & \text{ if $i \in \{1,\dots,5\} $},\\
     (-1)^i\cdot[0, 100]^T -x _i & \text{ if $i \in \{6,\dots, 11\} $},
\end{cases}}
robots will move in four different directions, which would break the network and prevent resilient consensus. Each simulation begins with every robot having at least seven neighbors, as shown in Fig.~\ref{fig:sim}~(a), where normal and malicious robots are blue and red, respectively. Links between normal robots are depicted as gray dotted lines, while links involving malicious robots are shown as red dotted lines.

The robot formations at $t=10$ seconds for each scenario are shown in Fig.~\ref{fig:sim}~(b)-(d). Regardless of whether malicious robots $i\in \Acal$ update $c_i(t)$ according to~\eqref{eq:connectivity}, robots eventually stop dispersing to allow normal robots maintain at least seven neighbors while avoiding collisions, validating~\Cref{thm:cbf-qp} and~\Cref{cor:naive}. Fig.~\ref{fig:sim_result}~(a)-(c) show evolution of the degree constraint functions~\eqref{eq:h_i} for each scenario. The results also highlight the effect of malicious behaviors on robots' movement, which is discussed in~\Cref{remark:diff}. Comparing Fig.~\ref{fig:sim}~(b) and~(c), we observe that understating behaviors impose more connectivity constraints for normal robots to remain close to malicious agents, leading to a more compact network. Conversely, as seen in Fig.~\ref{fig:sim}~(b) and (d), overstating behaviors relax these constraints, allowing normal agents to move more freely and form a more dispersed network. Notably, one malicious robot fails to maintain seven neighbors (shown in Fig.~\ref{fig:sim_result}~(c)). Importantly, these variations in updates do not compromise resilient consensus (as shown in Fig.~\ref{fig:sim_result}~(d)-(f)), demonstrating the effectiveness of our controller. 

\section{Conclusion}

\label{sec:conc}
In this work, we propose a distributed CBF-based controller that guarantees resilient consensus among robots during navigation without relying on fixed network topologies. We first establish a sufficient condition for resilient consensus based solely on the minimum degree of normal agents in a network. Leveraging this condition, we design a controller that enables robots to make movement decisions using only local information, ensuring resilient consensus in a time-varying network. Furthermore, our current analysis assumes continuous connectivity information and a specific form of malicious attack, while real systems operate with discrete updates and potentially varied threats. We aim to address these issues in future extensions.






\bibliographystyle{IEEEtran}
\bibliography{references_ll}

\begin{thebibliography}{10}
\providecommand{\url}[1]{#1}
\csname url@samestyle\endcsname
\providecommand{\newblock}{\relax}
\providecommand{\bibinfo}[2]{#2}
\providecommand{\BIBentrySTDinterwordspacing}{\spaceskip=0pt\relax}
\providecommand{\BIBentryALTinterwordstretchfactor}{4}
\providecommand{\BIBentryALTinterwordspacing}{\spaceskip=\fontdimen2\font plus
\BIBentryALTinterwordstretchfactor\fontdimen3\font minus \fontdimen4\font\relax}
\providecommand{\BIBforeignlanguage}[2]{{%
\expandafter\ifx\csname l@#1\endcsname\relax
\typeout{** WARNING: IEEEtran.bst: No hyphenation pattern has been}%
\typeout{** loaded for the language `#1'. Using the pattern for}%
\typeout{** the default language instead.}%
\else
\language=\csname l@#1\endcsname
\fi
#2}}
\providecommand{\BIBdecl}{\relax}
\BIBdecl

\bibitem{zhu2019}
H.~Zhu, J.~Juhl, L.~Ferranti, and J.~Alonso-Mora, ``Distributed multi-robot formation splitting and merging in dynamic environments,'' in \emph{International Conference on Robotics and Automation (ICRA)}, 2019, pp. 9080--9086.

\bibitem{tan2022}
X.~Tan and D.~V. Dimarogonas, ``Distributed implementation of control barrier functions for multi-agent systems,'' \emph{IEEE Control Systems Letters}, vol.~6, pp. 1879--1884, 2022.

\bibitem{LeBlanc13}
H.~J. LeBlanc, H.~Zhang, X.~Koutsoukos, and S.~Sundaram, ``Resilient asymptotic consensus in robust networks,'' \emph{IEEE Journal on Selected Areas in Communications}, vol.~31, no.~4, pp. 766--781, 2013.

\bibitem{wang2022}
Y.~Wang, H.~Ishii, F.~Bonnet, and X.~Défago, ``Resilient real-valued consensus in spite of mobile malicious agents on directed graphs,'' \emph{IEEE Transactions on Parallel and Distributed Systems}, vol.~33, no.~3, pp. 586--603, 2022.

\bibitem{CDC2024}
H.~Lee and D.~Panagou, ``Construction of the sparsest maximally r-robust graphs,'' in \emph{2024 IEEE 63rd Conference on Decision and Control (CDC)}, 2024, pp. 170--177.

\bibitem{TAC2025}
------, ``Minimal construction of graphs with maximum robustness,'' \emph{arXiv preprint arXiv:2507.00415}, 2025.

\bibitem{leblanc2012_continuous}
H.~J. LeBlanc and X.~D. Koutsoukos, ``Low complexity resilient consensus in networked multi-agent systems with adversaries,'' in \emph{Proceedings of the 15th ACM International Conference on Hybrid Systems: Computation and Control}.\hskip 1em plus 0.5em minus 0.4em\relax New York, NY, USA: Association for Computing Machinery, 2012, p. 5–14.

\bibitem{zhang2015}
H.~Zhang, E.~Fata, and S.~Sundaram, ``A notion of robustness in complex networks,'' \emph{IEEE Transactions on Control of Network Systems}, vol.~2, no.~3, pp. 310--320, 2015.

\bibitem{saldana2017}
D.~Saldaña, A.~Prorok, S.~Sundaram, M.~F.~M. Campos, and V.~Kumar, ``Resilient consensus for time-varying networks of dynamic agents,'' in \emph{2017 American Control Conference (ACC)}, 2017, pp. 252--258.

\bibitem{time_varying_strongly_usevitch20}
J.~Usevitch and D.~Panagou, ``Resilient leader-follower consensus to arbitrary reference values in time-varying graphs,'' \emph{IEEE Transactions on Automatic Control}, vol.~65, pp. 1755--1762, 2020.

\bibitem{guerrero2020}
L.~Guerrero-Bonilla and V.~Kumar, ``Realization of $r$ -robust formations in the plane using control barrier functions,'' \emph{IEEE Control Systems Letters}, vol.~4, no.~2, pp. 343--348, 2020.

\bibitem{saulnier2017}
K.~Saulnier, D.~Saldaña, A.~Prorok, G.~J. Pappas, and V.~Kumar, ``Resilient flocking for mobile robot teams,'' \emph{IEEE Robotics and Automation Letters}, vol.~2, pp. 1039--1046, 2017.

\bibitem{cavorsi2022}
M.~Cavorsi, B.~Capelli, and S.~Gil, ``Multi-robot adversarial resilience using control barrier functions,'' \emph{Robotics: Science and Systems}, 2022.

\bibitem{zhang2024}
Z.~Zhang, Y.~Wu, J.~Jiang, N.~Zheng, and W.~Meng, ``Realization of robust formation for multi-uav systems using control barrier functions,'' \emph{Unmanned Systems}, vol.~0, no.~0, pp. 1--15, 2024.

\bibitem{ICRA2025}
H.~Lee and D.~Panagou, ``Maintaining strong r-robustness in reconfigurable multi-robot networks using control barrier functions,'' in \emph{IEEE International Conference on Robotics and Automation}, 2025.

\bibitem{ames_cbf}
A.~Ames, S.~D. Coogan, M.~Egerstedt, G.~Notomista, K.~Sreenath, and P.~Tabuada, ``Control barrier functions: Theory and applications,'' \emph{2019 18th European Control Conference (ECC)}, pp. 3420--3431, 2019.

\bibitem{bhatia_decentralized_2024}
P.~Bhatia, S.~B. Roy, P.~Sujit, L.~M. Alvarez, and A.~McFadyen, ``\BIBforeignlanguage{en}{Decentralized {Connectivity} {Maintenance} for {Multi}-{Agent} {Systems} {Using} {Control} {Barrier} {Functions}},'' in \emph{\BIBforeignlanguage{en}{2024 {International} {Conference} on {Unmanned} {Aircraft} {Systems} ({ICUAS})}}.\hskip 1em plus 0.5em minus 0.4em\relax Chania - Crete, Greece: IEEE, Jun. 2024, pp. 955--962.

\bibitem{de_carli_distributed_2024}
N.~De~Carli, P.~Salaris, and P.~R. Giordano, ``\BIBforeignlanguage{en}{Distributed {Control} {Barrier} {Functions} for {Global} {Connectivity} {Maintenance}},'' in \emph{\BIBforeignlanguage{en}{2024 {IEEE} {International} {Conference} on {Robotics} and {Automation} ({ICRA})}}.\hskip 1em plus 0.5em minus 0.4em\relax Yokohama, Japan: IEEE, May 2024, pp. 12\,048--12\,054.

\bibitem{capelli2020}
B.~Capelli and L.~Sabattini, ``Connectivity maintenance: Global and optimized approach through control barrier functions,'' in \emph{2020 IEEE International Conference on Robotics and Automation (ICRA)}, 2020, pp. 5590--5596.

\bibitem{discrete_time_varying_consensus_sufficient}
W.~Ren and R.~Beard, ``Consensus seeking in multiagent systems under dynamically changing interaction topologies,'' \emph{IEEE Transactions on Automatic Control}, vol.~50, no.~5, pp. 655--661, 2005.

\bibitem{leblanc-thesis}
H.~J. LeBlanc, \emph{Resilient cooperative control of networked multi-agent systems}.\hskip 1em plus 0.5em minus 0.4em\relax Vanderbilt University, 2012.

\bibitem{Fiedler1973}
M.~Fiedler, ``\BIBforeignlanguage{eng}{Algebraic connectivity of graphs},'' \emph{\BIBforeignlanguage{eng}{Czechoslovak Mathematical Journal}}, vol.~23, no.~2, pp. 298--305, 1973.

\bibitem{kconnectivity2019}
O.~Dagdeviren and V.~Khalilpour~Akram, ``\BIBforeignlanguage{en}{{KEIP}: a distributed k-connectivity estimation algorithm based on independent paths for wireless sensor networks},'' \emph{\BIBforeignlanguage{en}{Wireless Networks}}, vol.~25, no.~8, pp. 4479--4491, Nov. 2019.

\bibitem{black2023}
M.~Black and D.~Panagou, ``Adaptation for validation of consolidated control barrier functions,'' in \emph{IEEE Conference on Decision and Control (CDC)}, 2023, pp. 751--757.

\bibitem{glotfelter2017}
P.~Glotfelter, J.~Cortés, and M.~Egerstedt, ``Nonsmooth barrier functions with applications to multi-robot systems,'' \emph{IEEE Control Systems Letters}, vol.~1, no.~2, pp. 310--315, 2017.

\bibitem{jankovic_collision_2024}
M.~Jankovic, M.~Santillo, and Y.~Wang, ``Multiagent {Systems} {With} {CBF}-{Based} {Controllers}: {Collision} {Avoidance} and {Liveness} {From} {Instability},'' \emph{IEEE Transactions on Control Systems Technology}, vol.~32, pp. 705--712, Mar. 2024.

\bibitem{still2018lectures}
G.~Still, ``Lectures on parametric optimization: An introduction,'' \emph{Optimization Online}, p.~2, 2018.

\bibitem{mestres_regularity_2025}
P.~Mestres, A.~Allibhoy, and J.~Cortés, ``\BIBforeignlanguage{en}{Regularity properties of optimization-based controllers},'' \emph{\BIBforeignlanguage{en}{European Journal of Control}}, p. 101098, Jan. 2025.

\bibitem{agrawal2025reformulations}
D.~R. Agrawal, H.~Lee, and D.~Panagou, ``Reformulations of quadratic programs for lipschitz continuity,'' \emph{arXiv preprint arXiv:2508.18530}, 2025.

\bibitem{bias_attack1}
A.~Teixeira, G.~Dán, H.~Sandberg, and K.~H. Johansson, ``A cyber security study of a scada energy management system: Stealthy deception attacks on the state estimator*,'' \emph{IFAC Proceedings Volumes}, vol.~44, no.~1, pp. 11\,271--11\,277, 2011, 18th IFAC World Congress.

\end{thebibliography}

\end{document}